\newcommand{\lv}[1]{#1}
\newcommand{\sv}[1]{}
\newcommand{\todo}[1]{}
\renewcommand{\todo}[1]{{\color{red} TODO: {#1}}}
\newcolumntype{R}[1]{>{\raggedleft\arraybackslash}p{#1}}
\newcommand{\PPP}{\mathcal{P}}
\newcommand{\der}{\PPP}
\newcommand{\mheight}{\textup{h}}
\newcommand{\SB}{\{\,}%
\newcommand{\SM}{\mid}
\newcommand{\SE}{\,\}}%
\newcommand{\bigO}{\mathcal{O}}
\newcommand{\adhesion}{\mathsf{ad}}
\newcommand{\torsow}{\mathsf{tor}}
\newcommand{\tcw}{\mathsf{tcw}}
\newcommand{\td}{\mathsf{td}}
\newcommand{\pw}{\mathsf{pw}}
\newcommand{\tw}{\mathsf{tw}}
\newcommand{\comp}{\mathsf{s}}
\newcommand{\led}{\mathsf{l}}
\newcommand{\ctr}{\mathsf{\#}}
\newcommand{\anc}{\mathsf{anc}}
\newcommand{\pt}{\mathsf{par}}
\newcommand{\ro}{\mathsf{ro}}
\newcommand{\qed}{{\hfill\ensuremath{\blacksquare}}}
\newcommand\blfootnote[1]{%
  \begingroup
  \renewcommand\thefootnote{}\footnote{#1}%
  \addtocounter{footnote}{-1}%
  \endgroup
}
\begin{document}
{

\title{\Large SAT-Encodings for Treecut Width and Treedepth}

\author{Robert Ganian\thanks{Algorithms and Complexity Group, TU Wien, Vienna, Austria}
\and Neha Lodha$^*$
\and Sebastian Ordyniak\thanks{Algorithms Group, University of Sheffield, Sheffield, UK}
\and Stefan Szeider$^*$
}

\date{}



\maketitle

 

\begin{abstract}
  The decomposition of graphs is a prominent algorithmic task with numerous applications in computer science. A graph decomposition method is typically associated with a width parameter (such as treewidth) that indicates how well the given graph can be decomposed. Many hard (even \#P-hard) algorithmic problems can be solved efficiently if a decomposition of small width is provided;  the runtime, however, typically depends exponentially on the decomposition width. Finding an optimal decomposition is itself an NP-hard task. In this paper we propose, implement, and test the first practical decomposition algorithms for the width parameters \emph{treecut width} and \emph{treedepth}. These two parameters have recently gained a lot of attention in the theoretical research community as they offer the algorithmic advantage over treewidth by supporting so-called \emph{fixed-parameter algorithms} for certain problems that are not fixed-parameter tractable with respect to treewidth. However, the existing research has mostly been theoretical. A main obstacle for any practical or experimental use of these two width parameters is the lack of any practical or implemented algorithm for actually computing the associated decompositions. We address this obstacle by providing the first practical decomposition algorithms.
  \blfootnote{The authors acknowledge support by the Austrian Science
    Fund (FWF, projects W1255-N23, P31336, and P27721).}

Our approach for computing treecut width and treedepth decompositions is based on efficient encodings of these decomposition methods to the propositional satisfiability problem (SAT). Once an encoding is generated, any satisfiability solver can be used to find the decomposition. This allows us to leverage the surprising power of todays state-of-the art SAT solvers. The success of SAT-based decomposition methods crucially depends on the used  characterisation of the decomposition method, as not every characterisation is suitable for that task. For instance, the successful leading SAT encoding for treewidth is based on a characterisation of treewidth in terms of elimination orderings. For treecut width and treedepth, however, we propose new characterisations that are based on sequences of partitions of the vertex set, a method that was pioneered for clique-width. We implemented and systematically tested our encodings on various benchmark instances, including famous named graphs and random graphs of various density. It turned out that for the considered width parameters, our partition-based SAT encoding even outperforms the best existing SAT encoding for treewidth.

We hope that our encodings---which we will make publicly available---will stimulate the experimental research on the algorithmic use of treecut width and tree depth, and thus will help to bride the gap between theoretical and experimental research.  For future work we propose to scale our approach to larger graphs by means of SAT-based local improvement, a method that have been recently shown successful for the width parameters treewidth and branchwidth.

\end{abstract}
}

\section{Introduction} 

Graph decompositions have been a central topic in the area of
combinatorial algorithms, with applications in many areas of computer
science. A graph decomposition method is typically associated with a
\emph{width parameter} that indicates how well the given graph can be
decomposed. Tree decompositions, for instance, give rise to the width
parameter treewidth. In most cases, finding an \emph{optimal
  decomposition}, i.e., one of smallest width, is an NP-hard task, so
that for practical purposes one often relies on heuristics that
compute suboptimal decompositions. However, there are several reasons
why one is interested in optimal decompositions. If the purpose of the
decomposition is to facilitate the solution of a hard problem by means of
dynamic programming, then a suboptimal decomposition may impose an
exponential increase on time and space requirements for the dynamic
programming algorithm, and therefore may render the approach infeasible for the
instance under consideration. For instance. Kask et
al.~\cite{KaskGelfandOttenDechter11} noted about inference on
probabilistic networks of bounded treewidth: ``[\dots] since inference
is exponential in the tree-width, a small reduction in tree-width (say
even by $1$ or $2$) can amount to one or two orders of magnitude
reduction in inference time.''  Besides such algorithmic
applications, optimal decompositions are also useful for scientific
purposes, for instance to evaluate a heuristic method that provides an
upper bound on the decomposition width, or to support theoretical
investigations by facilitating the construction of gadgets for
hardness reductions.

An appealing approach to finding optimal decompositions are SAT-encodings, where one translates a given graph~$G$ and an integer $w$
into a propositional formula $F(G,w)$ whose satisfying assignments 
correspond to a decomposition of $G$ of width at most $w$. The satisfiability
of the formula can then be checked by a state-of-the art SAT-solver~\cite{MalikZhang09,MarquessilvaLynce14}. This approach was pioneered by Samer and
Veith~\cite{SamerVeith09} for treewidth. Their encoding was further
expanded on in subsequent works~\cite{BannachBerndtEhlers17,BergJarvisalo14} and today it still remains one of the most efficient methods for computing optimal tree decompositions. 
SAT-encodings have also
been developed for other graph parameters, including clique-width~\cite{HeuleSzeider15}, branchwidth~\cite{LodhaOrdyniakSzeider16}, as
well as pathwidth and special treewidth~\cite{LodhaOrdyniakSzeider17}.
This line of research revealed that the efficiency of the SAT-encoding based approach
crucially depends on the underlying characterisation of the considered
decompositional parameter. Whereas for treewidth the \emph{elimination-ordering}
based characterisations have been shown to be best suited for SAT-encodings, other
decomposition parameters require other characterisations. A
very efficient SAT-encoding for clique-width was based on the newly
developed \emph{partition-based} characterisation of clique-width~\cite{HeuleSzeider15}. Partition-based encodings have also been shown to be efficient for other width parameters~\cite{LodhaOrdyniakSzeider16,LodhaOrdyniakSzeider17}.
 
In this paper we develop SAT-encodings for the width parameters
\emph{treecut width} and \emph{treedepth}. These two parameter are
both less general than treewidth, i.e., any graph class where either
of these two parameters is bounded, is also of bounded treewidth, but
there exist graph classes of bounded treewidth where neither of these
two parameters are bounded. Neither of the two parameters (treecut width
and treedepth) is more general than the other, though. The parameters
are of interest as they offer certain algorithmic advantages over
treewidth; in particular, they support so-called \emph{fixed-parameter algorithms} for certain problems that are not \emph{fixed-parameter tractable} with respect to treewidth (see any of the handbooks on parameterized complexity~\cite{DowneyFellows13,CyganFKLMPPS14,FlumGrohe06}), as well as having a significantly lower \emph{parameter dependency} than treewidth for certain problems~\cite{GajHlin15,ElberfeldGT16}.

So far, both parameters have mainly been the subject of
theoretical investigations. By our encodings we provide the first
practical methods for computing the associated decompositions and
therefore provide a first step of bridging theoretical with
experimental research.

\subsection{Treecut Width}
The parameter treecut width was introduced by
Wollan~\cite{Wollan15}. Treecut width is an edge-separator based decompositional parameter whose relationship to the fundamental notion of \emph{graph immersions} is 
analogous to the relationship between treewidth and \emph{graph minors}~\cite{MarxWollan14}. Kim et al.~\cite{KimOPST18} gave a
linear time 2-approximation algorithm for treecut width, however, such
an error factor is prohibitive for practical use.  Ganian et
al.~\cite{GanianKimSzeider15,GanianKluteOrdyniak18} provided the first algorithmic results for
treecut width, and pointed out that several problems that are not
fixed-parameter tractable for the parameter treewidth are
fixed-parameter tractable for the parameter treecut width. 

Given that treecut width arguably has the most complicated and unintuitive characterisation
among all studied width parameters, our first step was
to find a way to simplify the definition of treecut width. Such
a simplification has recently been proposed by Kim et
al.~\cite{KimOPST18}, showing that the definition of treecut
decompositions becomes significantly more manageable on $3$-edge-connected
graphs and that computing decompositions for general graphs can
be reduced to the $3$-edge-connected case. Using this simpler
definition together with an explicit preprocessing procedure for
general graphs (presented in Section~\ref{ssec:pre-tcw}), we introduce a
SAT-encoding for $3$-edge-connected graphs based on a partition-based
characterisation of treecut width in
Section~\ref{sec:tcw}. As our experiments show, the encoding performs
extraordinary well; outperforming even our arguably much simpler encoding
for treedepth and the current best-performing SAT-encoding for treewidth~\cite{BannachBerndtEhlers17,BergJarvisalo14,SamerVeith09}.



\subsection{Treedepth} The parameter treedepth was introduced by Ne\v set\v ril and Ossona de
Mendez~\cite{NesetrilM06} in the context of their
graph sparsity project~\cite{NesetrilMendez12}. 
This parameter has been shown to have algorithmic applications for a number of problems where treewidth cannot be used. For instance,
Gutin et al.~\cite{GutinJonesWahlstrom15} showed that the Mixed
Chinese Postman problem is fixed-parameter tractable for treedepth,
but \cal{W}[1]-hard for treewidth and even pathwidth. Several further
algorithmic results on treedepth have been presented recently by Iwata
et al.~\cite{IwataOgasawaraOhsaka18}, Kouteck{\'{y}} et
al. \cite{KourteckyLevinOnn18}, Ganian and
Ordyniak~\cite{GanianOrdyniak18}, and Gajarsk{\'{y}} and
Hlin\v en{\'{y}}~\cite{GajarskyHlineny12}. Exact algorithms for
computing treedepth are known, e.g., the problem is known to be
fixed-parameter tractable~\cite{ReidlRVS14} and can be solved slightly
faster than $\mathcal{O}(2^n)$~\cite{FominGP15}, however, until now no
implementation of an exact algorithm for treedepth was available.

We introduce and implement two SAT-encodings for treedepth. The first
one explicitly guesses the tree-structure of a treedepth decomposition
and the second one is based on a novel partition-based
characterisation of treedepth. Since the partition-based encoding
greatly outperformed our first encoding, we mostly focus on the
partition-based encoding\sv{ and provide the first encoding only in the
the full version of the paper}. The experimental results for our partition-based encoding
are very promising, showing an extraordinarily good performance on
sparse classes of graphs such as paths, cycles, and complete binary
trees. We also introduce three novel preprocessing and symmetry
breaking procedures for treedepth.

\sv{\smallskip \noindent {\emph{Statements whose proofs are located only in the full version are marked with $\star$.}}}

\subsection{Related Work}
We have already mentioned the successful application of SAT-encodings
for graph decompositions above
\cite{BannachBerndtEhlers17,BergJarvisalo14,HeuleSzeider15,LodhaOrdyniakSzeider16,SamerVeith09}.
At this juncture we would like to briefly give some further context on
SAT-encodings. While every problem in NP admits a polynomial-time SAT-encoding,
it is well known that 
different encodings can behave quite differently in
practice~\cite{Prestwich09}.  There are some formal criteria which
indicate whether an encoding will behave well or not. However, only by
an experimental evaluation one can see what really works well and what
does not~\cite{Bjork09}.  For instance, while encoding size is certainly a
factor to take into consideration, larger encodings can work better if
they allow a fast propagation of conflicts, so that the power of
state-of-the art SAT-solvers which are based on the conflict-driven
clause learning paradigm (CDCL)
\cite{MalikZhang09,MarquessilvaLynce14} can be harvested. SAT-encodings are not only useful for the solution of hard combinatorial
problems in industry, such as the verification of hardware and
software \cite{Biere09}, but are increasingly often used in the context of
Combinatorics, for instance in the context of Ramsey
Theory \cite{Zhang09}. A very recent highlight is the celebrated
solution to the Pythagorean Triples Problem
\cite{HeuleKullmannMarek16,HeuleKullmann17}. 

Lastly, we would like to
mention that developing partition-based encodings 
comes with challenges that are specific to each width parameter,
as it
almost always requires the development of a novel characterization that is compatible
with such an encoding. 
Indeed, the existence of such an encoding for the probably most prominent width parameter, treewidth, remains open.
%


\section{Preliminaries}


We use $[i]$ to denote the set $\{0,1,\dots,i\}$. 
A \emph{weak partition} of
a set $S$ is a set $P$ of nonempty subsets of $S$ such that any two
sets in $P$ are disjoint; if additionally $S$ is the union of
all sets in $P$ we call $P$ a \emph{partition}. 
The
elements of~$P$ are called \emph{equivalence classes}.  Let $P,P'$ be
partitions of $S$. Then $P'$ is a \emph{refinement} of $P$ if for any
two elements $x,y\in S$ that are in the same equivalence class of $P'$
are also in the same equivalence class of $P$ (this entails the case
$P=P'$). 

\subsection{Formulas and Satisfiability}\label{ssec:pre-formula}
We consider propositional formulas in Conjunctive Normal Form
(\emph{CNF formulas}, for short), which are conjunctions of clauses,
where a clause is a disjunction of literals, and a literal is a
propositional variable or a negated propositional variable.  
A CNF
formula is \emph{satisfiable} if its variables can be assigned true or
false, such that each clause
contains either a variable set to true or
a negated variable set to false.  The satisfiability problem (SAT)
asks whether a given formula is satisfiable.

We will now introduce a few general assumptions and
notation that is shared among our encodings. Namely, for our encodings
we will assume that we are given an undirected graph $G=(V,E)$
and an integer $\omega$, which represents the width that
we are going to test. Moreover, we will assume that the vertices
of $G$ are numbered from $1$ to $n$ and similarly the edges are
numbered from $1$ to $m$.

For the counting part of our encodings
we will employ
the \emph{sequential counter} approach~\cite{SamerVeith09} since this
approach has turned out to provide the best results in our setting.
To illustrate the idea behind the sequential counter consider the case
that one is given a set $S$ of (propositional) variables and one
needs to restrict the number of variables in $S$ that are set to true
to be at most some integer $k$. For convenience, we refer to the
elements in $S$ using the numbers from $1$ to $|S|$. In this case one introduces
a counting variable $\ctr(s,j)$ for every $s \in S$ and $j$ with $1
\leq j \leq k$, which is true whenever there are at least $j$
variables in $\SB s' \SM s' \leq s\text{ and } s,s'\in S
\SE$ that are set to true. Then this can be ensured using the 
following clauses.
A clause $\neg s \lor \ctr(s,1)$
for every $s \in S$, 
a clause $\neg \ctr(s-1,j) \lor \ctr(s,j)$
for every $s\in S$ and $j$ with $s> 1$ and $1 \leq j
\leq k$, a clause $\neg s \lor \neg \ctr(s-1,j-1) \lor
\ctr(s,j)$
for
every $s\in S$ and $j$ with $s>1$ and $1 < j \leq k$, and
a clause $\neg s \lor \neg \ctr(s-1,k)$
for every $s \in S$ with $s > 1$.
This adds at most $\bigO(|S|k)$ variables and clauses to the
original formula.

\subsection{Graphs}

We use standard terminology for graph theory, see for
instance~\cite{Diestel10}. All graphs in this paper are undirected and
may contain multiedges.  Given a graph $G$, we let $V(G)$ denote its
vertex set and $E(G)$ its (multi-) set of edges.  The (open)
neighbourhood of a vertex $x \in V(G)$ is the set $\{y\in V(G):xy\in
E(G)\}$ and is denoted by $N_G(x)$. For a vertex subset $X$, the
neighbourhood of $X$ is defined as $\bigcup_{x\in X} N_G(x) \setminus
X$ and denoted by $N_G(X)$; we drop the subscript if the graph is
clear from the context.
For a vertex set
$A$ (or edge set $B$), we use $G-A$ ($G-B$) to denote the graph
obtained from $G$ by deleting all vertices in $A$ (edges in $B$), and
we use $G[A]$ to denote the \emph{subgraph induced on} $A$, i.e., $G-
(V(G)\setminus A)$.
Let $T$ be a rooted tree and $t \in V(T)$.
We 
write $T_t$ to denote the subtree of $T$ rooted in $t$, i.e., the
component of $T \setminus \{\{t,p\}\}$ containing $t$, where $p$ is
the parent of $t$ in $T$. We denote by $\mheight_T(t)$, the
\emph{height} of $t$ in $T$, i.e., the length of the path between the
root of $T$ and $t$ in $T$ plus one, and we denote by $\mheight(T)$
the \emph{height} of
$T$, i.e., the maximum of $\mheight_T(t')$ over all $t' \in V(T)$.
Let $G$ be a graph. We say that two vertices $u$ and $v$ of $G$ are
\emph{$3$-edge-connected} in $G$ if there are at least $3$ pairwise
edge disjoint paths between $u$ and $v$ in $G$. We say a subset $C$ of
$V(G)$ is a \emph{$3$-edge-connected component} of $G$ if every pair
of distinct vertices in $C$ is $3$-edge-connected and $C$ is maximal
w.r.t.\ this property.
For a graph $G$ and a subset $V' \subseteq V(G)$, we denote by $\delta_G(V')$ the
(multi-)set of edges of $G$ having one endpoint in $V'$ and one
endpoint in $V(G) \setminus V'$ and omit the subscript $G$ if it can
be inferred from the context. An \emph{apex vertex} is a vertex adjacent to all other vertices
in the graph.


\subsection{Treecut Width}
\label{ssec:pre-tcw}
The notion of treecut width and treecut decomposition was originally
introduced for general graphs~\cite{MarxWollan14,Wollan15}. Here we
use a simpler definition, which allows for an easier encoding, and
only applies for $3$-edge-connected graphs.
Using known results~\cite{KimOPST18}, we will then show that
the treecut width for general graphs can be defined in terms of the
treecut widths of its $3$-edge-connected components.
\begin{figure}[ht]
  \begin{center}
  \includegraphics[width=0.25\textwidth]{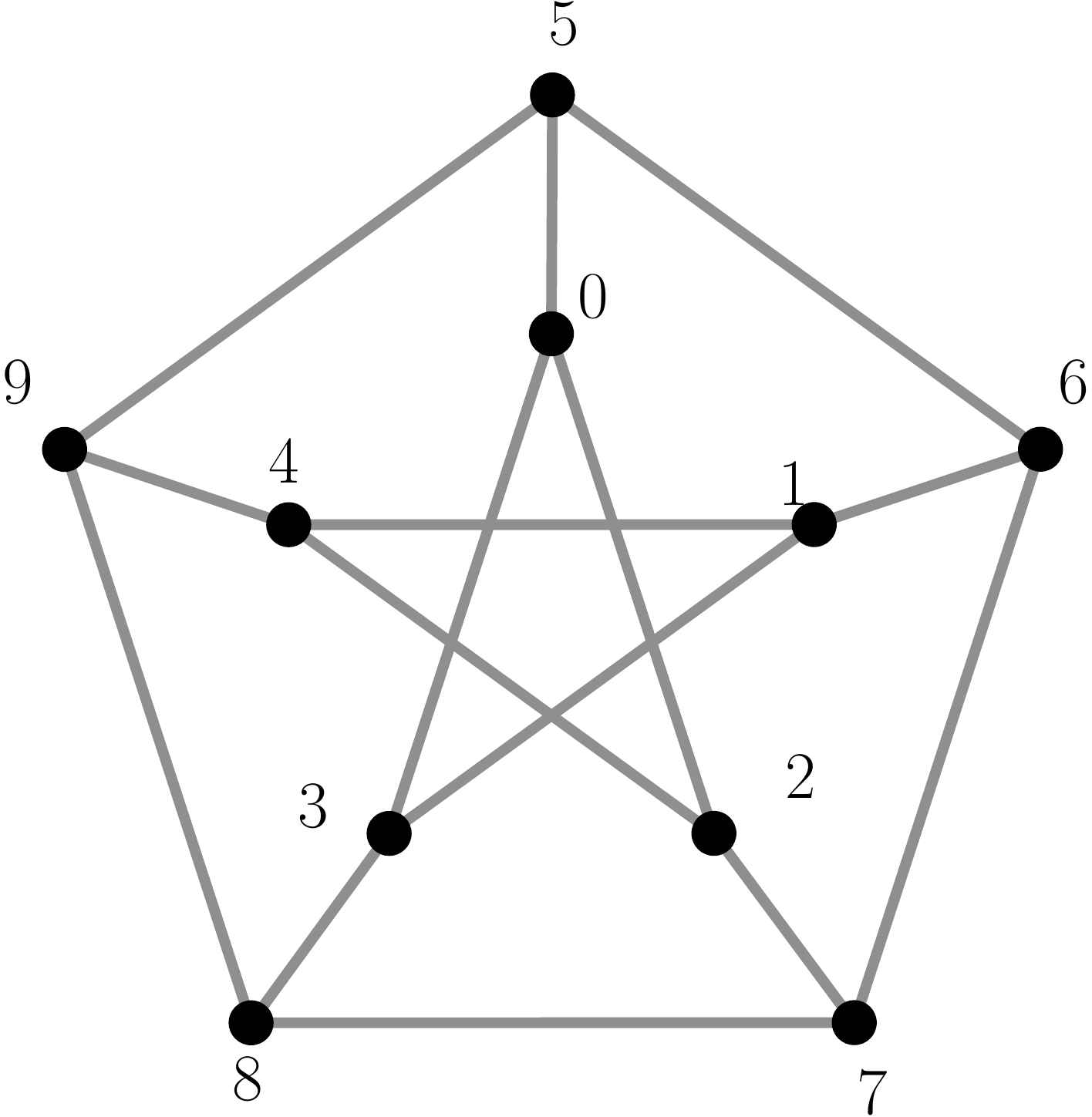}
  \begin{tikzpicture}[scale=1, node distance=0.6cm]
    \tikzstyle{tn} = [draw, circle, inner sep=1pt]
    \tikzstyle{ln} = []
    \tikzstyle{te} = [draw, line width=2pt]
    
    \draw
    node[tn] (9) {$9$}
    node[tn, below of=9] (7) {$7$}
    node[tn, below of=7] (0) {$0$}
    node[tn, below of=0] (1) {$1$}
    
    node[tn, below of=1] (3) {$3$}
    node[tn, below of=3] (8) {$8$}
    
    node[tn, node distance=1.6cm, right of=3] (2) {$2$}
    node[tn, below of=2] (4) {$4$}
    
    node[tn, node distance=1.6cm, left of=3] (6) {$6$}
    node[tn, below of=6] (5) {$5$}
    ;
    
    \draw
    (9) edge[te] (7)
    (7) edge[te] (0)
    (0) edge[te] (1)
    
    (1) edge[te] (3)
    (3) edge[te] (8)
    
    (1) edge[te] (2)
    (2) edge[te] (4)
    
    (1) edge[te] (6)
    (6) edge[te] (5)
    ;
  \end{tikzpicture}
  \quad\quad\quad\quad
  \begin{tikzpicture}[scale=1, node distance=2cm]
      \tikzstyle{tn} = [draw, ellipse, minimum height=1cm, minimum width=1.5cm]
      \tikzstyle{ln} = []
      \tikzstyle{te} = [draw, line width=2pt]
      
      \draw
      node[tn] (r) {$\emptyset$}
      node[node distance=2cm, below of=r] (br) {}
      node[tn, node distance=1cm, left of=br] (l2) {$1$}
      node[tn, left of=l2] (l1) {$0,5,6$}
      node[tn, right of=l2] (l3) {$2,4,7$}
      node[tn, right of=l3] (l4) {$3,8,9$}
      ;
      
      \draw[node distance=0.75cm]
      node[ln, below of=l1] (l1l) {$(5,4)$}
      node[ln, below of=l2] (l2l) {$(3,2)$}
      node[ln, below of=l3] (l3l) {$(5,4)$}
      node[ln, below of=l4] (l4l) {$(5,4)$}
      node[ln, above of=r] (rl) {$(0,4)$}
      ;
      
      \draw
      (r) edge[te] (l1)
      (r) edge[te] (l2)
      (r) edge[te] (l3)
      (r) edge[te] (l4)
      ;
    \end{tikzpicture}
  \end{center}
  \caption{A width-$6$ treedepth decomposition (top right) and
    a width-$5$ treecut decomposition (bottom) of the Petersen
    graph (top left). The treecut decomposition lists the adhesion (left value) and torso-width
    (right value) of each node.}
  \label{fig:tcw-ex}
\end{figure}

Let $G$ be a $3$-edge connected undirected graph (possibly with multi-edges and loops).
A \emph{treecut decomposition} of $G$ is a pair $(T,\chi)$, where $T$
is a rooted tree and $\chi : V(T) \rightarrow 2^{V(G)}$ such that $\SB
\chi(t) \SM t\in V(T)\SE$ forms a \emph{near partition} of $V(G)$,
i.e., a partition of $V(G)$ allowed to contain the empty set.
For a subgraph $T'$ of $T$, we denote by $\chi(T')$ the set
$\bigcup_{t \in V(T')}\chi(t)$.
Let $t \in V(T)$. We denote by $V_t$ the set $\chi(T_t)$. The
\emph{adhesion} of $t$, denoted by $\adhesion(t)$, is the (multi-)set
$\delta_G(V_t)$. Moreover, the \emph{torsowidth} of $t$, denoted by
$\torsow(t)$, is equal to $|\chi(t)|$ plus the number of neighbours of $t$ in
$T$. The \emph{width} of $(T,\chi)$ is the maximum width of
any of its nodes $t \in V(T)$, which in turn is equal to
$\max\{|\adhesion(t)|,\torsow(t)\}$. The \emph{height} of
$(T,\chi)$ is simply the height of $T$. Finally, the \emph{treecut width}
of $G$, denoted by $\tcw(G)$, is the minimum width of any of its treecut decompositions.
Figure~\ref{fig:tcw-ex} illustrates a treecut decomposition for the Peterson
graph.

The following lemma shows that if a graph is not $3$-edge-connected,
then it can be modified and split into parts in such a way that its
treecut width can be computed from the treecut width of the
(modified) parts. Since a recursive application of this lemma
eventually results in $3$-edge-connected graphs, the lemma allows us
to apply our encoding for $3$-edge-connected graphs to arbitrary graphs.
\begin{lemma}
    \label{lem:tcw-3ec}
  Let $G$ be a multigraph, $C$ be a minimal cut of size at most
  two resulting in the partition $(A,B)$ of $V(G)$, and let $A_C$ and
  $B_C$ be the endpoints of the edges in $C$ in $A$ and $B$,
  respectively. If $C$ contains two edges and $|A_C|=|B_C|=1$,
  then $\tcw(G)=\max\{2,\tcw(G[A]),\tcw(G[B])\}$. Otherwise,
  $\tcw(G)=\max\{\tcw(G_A),\tcw(G_B)\}$, where $G_A$ ($G_B$) is obtained
  from $G[A]$ ($G[B]$) after adding an edge between the vertices in $A_C$ ($B_C$);
  note that an edge is only added if $|A_C|=2$ or $|B_C|=2$, respectively.
\end{lemma}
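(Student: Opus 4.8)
The plan is to prove the two claimed equalities by exhibiting optimal treecut decompositions in both directions: from a decomposition of $G$ we extract decompositions of the parts, and from decompositions of the parts we assemble one for $G$. Throughout, the key observation is that a minimal cut $C$ of size at most two induces very restricted behaviour on any treecut decomposition, because adhesions are exactly the edge-cuts $\delta_G(V_t)$, and a cut of size $\le 2$ limits how the separation $(A,B)$ can interact with the tree structure.

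First I would handle the generic case (the ``Otherwise'' branch), where we must show $\tcw(G)=\max\{\tcw(G_A),\tcw(G_B)\}$. For the inequality $\tcw(G)\le\max\{\tcw(G_A),\tcw(G_B)\}$, I would take optimal decompositions $(T_A,\chi_A)$ of $G_A$ and $(T_B,\chi_B)$ of $G_B$ and glue them. The point is that in $G_A$ the added edge forces the vertices of $A_C$ to contribute to adhesions in a way that mimics the original cut edges of $C$; so I would identify a node of $T_A$ whose bag ``sees'' $A_C$ and attach $T_B$ there (or introduce a connecting edge), checking that the adhesion of the glued node equals $|C|\le 2$ and that no torsowidth or adhesion exceeds the maximum of the two part-widths. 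For the reverse inequality $\tcw(G)\ge\max\{\tcw(G_A),\tcw(G_B)\}$, I would start from an optimal decomposition $(T,\chi)$ of $G$ and argue that the cut $(A,B)$ is ``respected'' by some edge of $T$: because $|C|\le 2$ is a \emph{minimal} cut, the vertices of $A$ and the vertices of $B$ must be separable along a single tree edge after possibly rerouting, so restricting $\chi$ to $A$ (resp.\ $B$) yields a valid decomposition of $G_A$ (resp.\ $G_B$) once the added edge's endpoints $A_C$ are placed consistently. The added edge's contribution to any adhesion is at most $|C|$, so widths do not increase.

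Next I would treat the special case: $C$ has two edges and $|A_C|=|B_C|=1$, i.e.\ the cut consists of a double edge between a single vertex $a\in A$ and a single vertex $b\in B$. Here no edge is added (since $|A_C|=|B_C|=1$), and I would show $\tcw(G)=\max\{2,\tcw(G[A]),\tcw(G[B])\}$. The $\max$ with $2$ appears because the adhesion of the tree-node separating $A$ from $B$ is exactly the two-edge cut $C$, forcing width at least $2$ regardless of how small the parts' treecut widths are. For the upper bound I would glue optimal decompositions of $G[A]$ and $G[B]$ along the edge of $T$ corresponding to $C$; the only new adhesion created has size $2$, and the torsowidth of the joining node is controlled by the degree in $T$, which I would keep bounded by reusing existing nodes. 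For the lower bound, the same separation argument as above shows the parts' decompositions embed, and the presence of the size-$2$ adhesion forces the extra $2$.

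The \textbf{main obstacle} I anticipate is the reverse direction: showing that an optimal decomposition of $G$ genuinely \emph{splits} along the cut $(A,B)$. A priori the bags of $\chi$ could interleave vertices of $A$ and $B$ arbitrarily across the tree, so the crux is a rerouting/uncrossing argument using minimality of $C$: one must prove that the set $A$ (equivalently $B$) can be made to coincide with $\chi(T_t)$ for a single node $t$, or at least with the union over one pendant subtree, without increasing the width. This is exactly where the hypothesis $|C|\le 2$ and the \emph{minimality} of the cut are essential, since they bound the adhesion along the separating tree edge by $2$ and prevent the re-grouping from inflating any torsowidth; invoking the simplified $3$-edge-connected framework of Kim et al.~\cite{KimOPST18} should make this uncrossing step clean, and I would lean on it to avoid the full generality of Wollan's original definition.
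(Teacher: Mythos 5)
Your skeleton (glue decompositions of the parts for the upper bound, split an optimal decomposition of $G$ along the cut for the lower bound) is the right one, and you correctly identify the splitting/uncrossing step as the crux; the paper does not prove that step either but imports it wholesale from Lemmas~3 and~4 of Kim et al.~\cite{KimOPST18}, so deferring to that framework, as you propose, is exactly what is intended. The special case is also handled in the paper via \cite{KimOPST18}: first $\tcw(G)=\max\{\tcw(G[A\cup\{b\}]),\tcw(G[B\cup\{a\}])\}$, and then a small separate argument that $\tcw(G[A\cup\{b\}])=\max\{2,\tcw(G[A])\}$.

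There is, however, a genuine gap in your gluing direction. When you attach $T_B$ (or a leaf node containing $b$) to a node $t$ of $T_A$, the degree of $t$ in the tree increases by one, and under the torsowidth definition used in this paper ($\torsow(t)=|\chi(t)|$ plus the number of tree-neighbours of $t$) this raises $\torsow(t)$ by one. If the optimal decomposition of $G_A$ has no slack at any node that can legally host the attachment, your construction yields width $\tcw(G_A)+1$, and the claimed equality fails. Your remedy, ``keep bounded by reusing existing nodes,'' does not address this: any attachment of a new subtree increases some node's tree-degree. The fix, which the paper makes explicit, is that the lemma lives in the world of \emph{general} (not $3$-edge-connected) graphs, where torso-size is computed after suppressing degree-$\le 2$ vertices of the torso; the attached node is a \emph{thin node} (its adhesion is $|C|\le 2$), so it is suppressed and contributes nothing to the parent's torso-size. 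Without invoking the general definition and the thin-node property, the upper-bound half of both cases is not established. A secondary, smaller omission: in the ``Otherwise'' case your lower-bound sketch should also verify that replacing the $\le 2$ cut edges by the single added edge inside $A_C$ never increases any adhesion $|\delta_{G_A}(V_t\cap A)|$ beyond $|\delta_G(V_t)|$; this again is precisely what the cited lemmas of Kim et al.\ certify.
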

\begin{proof}
  The proof is closely based on the ideas in~\cite[Section
  3]{KimOPST18}. Namely, in the case that $C$ does not contain two
  edges sharing the same endpoints, the proof follows immediately
  from~\cite[Lemma 3 and 4]{KimOPST18}. Moreover, if $C$ contains two
  edges sharing the same endpoints, say $a \in A$ and $b \in B$, it
  follows from~\cite[Lemma 3]{KimOPST18} that $\tcw(G)=\max
  \{\tcw(G[A\cup \{b\}]),\tcw(G[B \cup \{a\}])$. Moreover, using the
  definition of treecut width for arbitrary graphs and recalling that $b$ has precisely $2$ neighbours in $A$ (and similarly $a$ has precisely $2$ neighbours in $B$), it is then easy to
  see that $\tcw(G[A\cup\{b\}])=\max\{2,\tcw(G[A])\}$ and similarly
  $\tcw(G[B\cup\{a\}])=\max\{2,\tcw(G[B])\}$, from which the lemma follows.   
  More precisely, this follows immediately by observing that (1)
  $\tcw(G[\{a,b\}])=2$ and (2) a treecut decomposition $(T,\chi)$ of $G[A]$
  ($G[B]$) of width $w$ can be turned into a treecut decomposition of
  $G[A\cup\{b\}]$ ($G[B\cup\{a\}]$) of width $\max\{2,w\}$ by adding
  a leaf $l$ containing $b$ ($a$) as a leaf to an arbitrary node of
  $T$. Note that $l$ has torsowidth at most $2$ and the torsowidth of the neighbor of $l$ in $T$
  is not increased; to see this one needs to use the definition
  of treecut width on general graphs and the fact that $l$ is a
  \emph{thin node}~\cite{GanianKimSzeider15}. \qed
\end{proof}
We also give the known relations between treecut width, treewidth, and maximum degree.
\begin{lemma}[\cite{GanianKimSzeider15,MarxWollan14,Wollan15}]\label{lem:tcw-rel}
  For every graph $G$, $\tw(G)\leq 2\tcw(G)^2+3\tcw(G)$ and
  $\tcw(G)\leq 4\Delta(G) \cdot \tw(G)$, where $\Delta(G)$ and
  $\tw(G)$ denote the maximum degree and treewidth of $G$, respectively.
\end{lemma}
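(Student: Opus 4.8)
The plan is to establish the two inequalities independently, in each case by transforming an optimal decomposition of the source type into a decomposition of the target type and bounding its width. Throughout I write $w=\tcw(G)$ and $k=\tw(G)$, and for a vertex $x$ I let $t_x$ denote the unique node with $x\in\chi(t_x)$.

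For $\tw(G)\le 2\tcw(G)^2+3\tcw(G)$ I would start from a treecut decomposition $(T,\chi)$ of width $w$ and build a tree decomposition on the same tree $T$. For a node $t$ I let $\beta(t)$ consist of $\chi(t)$, both endpoints of every edge of $\adhesion(t)$, and both endpoints of every edge of $\adhesion(c)$ for each child $c$ of $t$ (thin children, i.e.\ those whose adhesion has at most two edges, I would route separately, see below). I then verify the three axioms directly: vertex coverage is immediate since $\chi(t)\subseteq\beta(t)$; an edge $uv$ with $t_u\ne t_v$ is covered at the least common ancestor $\ell$ of $t_u$ and $t_v$, because there $uv$ lies in the adhesions of the two children of $\ell$ descending towards $t_u$ and $t_v$, so both endpoints enter $\beta(\ell)$; and the bags containing a fixed vertex $v$ form the union of the tree-paths joining $t_v$ to the nodes $t_u$ of its neighbours $u$, which is connected because every such path emanates from $t_v$. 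For the width I use that the torso-width bound gives $|\chi(t)|\le w$ and that at most $w$ of the tree-edges incident to $t$ carry an adhesion of more than two edges; each such adhesion has at most $w$ edges, hence at most $2w$ endpoints, contributing at most $2w^2$, and adding $|\chi(t)|\le w$ and the at most $2w$ endpoints of $\adhesion(t)$ itself yields $|\beta(t)|\le 2w^2+3w$.

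For $\tcw(G)\le 4\Delta(G)\cdot\tw(G)$ I would go the other way: take an optimal rooted tree decomposition $(T,\beta)$ of width $k$ and, to control node degrees, first make it binary in the standard width-preserving way, so that every node has at most three neighbours in $T$. I then define a treecut decomposition $(T,\chi)$ by setting $\chi(t)=\beta(t)\setminus\beta(p(t))$ to be the vertices first introduced at $t$, where $p(t)$ is the parent of $t$; this is a near partition of $V(G)$, the extra split copies simply receiving $\chi(t)=\emptyset$. To bound the adhesion of $t$ I invoke the separator property of tree decompositions: every vertex of $V_t=\chi(T_t)$ with a neighbour outside $V_t$ lies in $\beta(t)\cap\beta(p(t))$, a set of at most $k+1$ vertices, and each such vertex is incident to at most $\Delta(G)$ edges, so $|\adhesion(t)|\le(k+1)\Delta(G)$. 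The torso-width is at most $|\chi(t)|$ plus the number of incident branches, hence at most $(k+1)+3$. Taking the maximum of the two quantities and absorbing the small additive constants into the factor four gives the claim.

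The step I expect to be the main obstacle is the treatment of thin branches in the first inequality. In the general definition of treecut width a node $t$ may be incident to unboundedly many branches of adhesion at most two, and these do not count towards the torso-width, so $t$ can have arbitrarily high degree in $T$ while still having torso-width at most $w$; naively placing both endpoints of all their adhesion edges into the single bag $\beta(t)$ would destroy the quadratic bound. The remedy is that such a branch has at most two outer endpoints, which I place into the bag of the root of that branch rather than into $\beta(t)$; since the shared vertices then lie in $\chi(t)\subseteq\beta(t)$, connectivity and edge coverage are preserved, and only the at most $w$ non-thin branches contribute to $\beta(t)$. Reconciling the torso-width accounting, which ignores the thin branches, with the tree-decomposition bag, which cannot afford all of them, is exactly where the care is required and where the quadratic dependence on $w$ genuinely arises; by contrast, the degree reduction used for the second inequality is entirely routine.
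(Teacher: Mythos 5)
First, note that the paper does not prove this lemma at all: it is imported from the cited references (Wollan, Marx--Wollan, Ganian--Kim--Szeider), so there is no in-paper argument to compare your route against. Your overall strategy -- translating an optimal decomposition of one kind into the other in each direction -- is the standard one from those references. Your second inequality is essentially correct: the fact that every edge of $\delta_G(V_t)$ has an endpoint in $\beta(t)\cap\beta(p(t))$ gives $|\adhesion(t)|\leq(\tw(G)+1)\Delta(G)$, the binarisation controls the torso-width, and the remaining slack is genuinely just constant bookkeeping.

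The first inequality, however, has a real gap exactly at the point you flagged. Your remedy for thin children -- putting the outer endpoints of $\adhesion(c)$ into $\beta(c)$ instead of $\beta(t)$ -- breaks the connectivity axiom of the tree decomposition. Concretely, take an edge $uv$ whose endpoints live in different subtrees below a node $\ell$, with both relevant children $c_1$ (towards $u$) and $c_2$ (towards $v$) thin. The nodes whose adhesion contains $uv$ are precisely those on the two legs from $t_u$ up to $c_1$ and from $t_v$ up to $c_2$; under your modification $u$ appears in all of these bags but not in $\beta(\ell)$, so the trace of $u$ splits into two components that meet only at $\ell$. Your justification that ``the shared vertices then lie in $\chi(t)\subseteq\beta(t)$'' is simply not true for an arbitrary treecut decomposition: the outer endpoint of a thin child's adhesion edge may sit inside a \emph{sibling} subtree $V_{c_1}$, hence in neither $\chi(\ell)$ nor $\adhesion(\ell)$. (A triangle split as $\chi(\ell)=\{x\}$, $\chi(c_1)=\{u\}$, $\chi(c_2)=\{v\}$ already exhibits the disconnected trace.) The missing ingredient is the normalisation to a \emph{nice} treecut decomposition in the sense of Ganian--Kim--Szeider, in which every thin child $c$ satisfies $N(V_c)\cap\bigcup_{\text{siblings }b}V_b=\emptyset$; this can be achieved without increasing the width, and only then does the outer endpoint of a thin branch land in $\chi(\ell)$ or outside $V_\ell$ (hence in $\beta(\ell)$ via your adhesion rule), which is what your connectivity claim tacitly assumes. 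With that normalisation inserted, your width accounting $|\beta(t)|\leq w+2w+w\cdot 2w$ goes through and the argument is the standard one.
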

We close this section by showing explicit values of treecut width for
complete graphs ($K_n$) and complete bipartite graphs ($K_{n,n}$), which
we later employ to verify the correctness of our encoding. \lv{For the
proofs we will assume w.l.o.g. that a treecut decomposition $(T,\chi)$
does not contain unnecessary nodes, i.e., $\chi(l)\neq \emptyset$ if $l$ has at most one child in $T$.}
\sv{\begin{lemma}[$\star$]
  \label{lem:tcw-comp-sv}
  For every $n\geq 3$, it holds that $\tcw(K_{n+1})=n+1$ and $\tcw(K_{n,n})=2n-2$.
\end{lemma}}
\lv{\begin{lemma}
  \label{lem:tcw-comp}
  For every $n\geq 4$, $\tcw(K_n)=n$.
\end{lemma}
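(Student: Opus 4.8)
The plan is to establish the two inequalities $\tcw(K_n) \le n$ and $\tcw(K_n) \ge n$ separately. The upper bound is the easy direction: I would exhibit an explicit treecut decomposition achieving width $n$. The natural candidate is the \emph{trivial} or \emph{star-like} decomposition. For instance, take $T$ to be a single node (or a root with the whole vertex set in its bag): then for the unique node $t$ with $\chi(t)=V(K_n)$ we have $\adhesion(t)=\emptyset$ since $V_t=V(G)$, and $\torsow(t)=|\chi(t)|+0=n$, giving width exactly $n$. This already shows $\tcw(K_n)\le n$, so no cleverness is needed on this side.

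The lower bound $\tcw(K_n)\ge n$ is the substance of the proof, and I expect it to be the main obstacle. The strategy is to argue by contradiction: suppose $(T,\chi)$ is a treecut decomposition of width $w<n$, and derive a contradiction by a counting argument on the high edge-connectivity of $K_n$. The key structural facts I would use are that in $K_n$ every vertex subset $V'$ with $1\le |V'|\le n-1$ has a large boundary, namely $|\delta(V')|=|V'|\cdot(n-|V'|)$. For any node $t$ with subtree $T_t$ not containing the root, the adhesion constraint forces $|V_t|\cdot(n-|V_t|)=|\adhesion(t)|\le w<n$, and since $n\ge 4$ this can only hold when $V_t$ is empty or when $V_t=V(G)$; more carefully, for $1\le|V_t|\le n-1$ the product $|V_t|(n-|V_t|)$ is minimised at the extremes and equals $n-1$, so $|\adhesion(t)|\ge n-1$ always, which already pins $w\ge n-1$ and forces every nonempty, non-full $V_t$ to be a singleton or co-singleton. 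Invoking the simplifying assumption stated just before the lemma (that the decomposition has no unnecessary nodes, i.e.\ $\chi(l)\ne\emptyset$ whenever $l$ has at most one child), I would then show the tree cannot simultaneously respect these singleton constraints and keep all torsowidths below $n$.

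Concretely, the finishing argument is a local analysis at a well-chosen node. Because every proper subtree captures at most a singleton of $V(G)$ (or its complement), the bag structure is extremely rigid: essentially each vertex of $K_n$ must sit near a leaf, and these leaves attach to a small number of internal nodes. I would locate a node $t$ whose torsowidth $|\chi(t)|+\deg_T(t)$ is forced to reach $n$: either $t$ has many children (one per vertex it separates off), driving up the degree term, or it must hold many vertices directly in its bag, driving up the $|\chi(t)|$ term. The interplay between the adhesion bound (which forbids large balanced cuts, pushing vertices into separate branches) and the torsowidth bound (which penalises high degree and large bags) is exactly what forces width $n$. The main obstacle is handling the trade-off cleanly, since a vertex can be ``charged'' either as a separate child (costing degree) or as a bag element (costing bag size), and I must verify that neither choice nor any mixture lets the total drop below $n$; this is where the assumption $n\ge 4$ and the no-unnecessary-nodes convention do the real work, ruling out the small cases and degenerate trees where the bound would otherwise fail.
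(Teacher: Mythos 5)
Your proposal is correct and follows essentially the same route as the paper: the trivial decomposition for the upper bound, then for the lower bound the observation that $|\delta(V_t)|=|V_t|(n-|V_t|)<n$ together with $n\geq 4$ and the no-unnecessary-nodes convention forces every tree edge to cut off a single leaf holding a single vertex, so $T$ is a star. The trade-off you describe at the end is resolved by the paper in one line: each vertex of $K_n$ contributes to the center's torsowidth either through its bag or as a leaf-child, giving $\torsow(r)=|\chi(r)|+(n-|\chi(r)|)=n$.
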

\begin{proof}
First, note that $\tcw(K_n)\leq n$ since there is a trivial treecut decomposition of width $n$. So, assume for a contradiction that there exists a treecut decomposition $(T,\chi)$ of $K_n$ of width smaller than $n$; without loss of generality, we assume that $\chi(t)\neq \emptyset$ for each leaf $t$ of $T$, and similarly $\chi(t')\neq \emptyset$ for each parent $t'$ of a single leaf in $T$. Since $n\geq 4$, in order not to exceed the bound on the width due to adhesion, for each edge $e$ of $T$ it must hold that one tree in $T-e$ must contain a single node (i.e., a leaf) $u$ and $|\chi(u)|=1$. From this it follows that $T$ must be a star (with center $r$). However, since the torsowidth of $r$ is equal to $|\chi(r)|$ plus the number of leaves (each representing a single vertex of $K_n$), we see that $\torsow(r)=n$, a contradiction.  
\qed
\end{proof}
\begin{lemma}
  \label{lem:tcw-comp-bip}
  For every $n\geq 3$, $\tcw(K_{n,n})=2n-2$.
\end{lemma}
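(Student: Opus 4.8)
The plan is to establish both bounds directly from the $3$-edge-connected definition, which applies to $G=K_{n,n}$ as a whole since $K_{n,n}$ is $n$-edge-connected and hence (for $n\ge 3$) a single $3$-edge-connected component. Write $A=\{a_1,\dots,a_n\}$ and $B=\{b_1,\dots,b_n\}$ for the two sides. The computation underlying everything is the cut size: if a vertex set $S$ contains $i$ vertices of $A$ and $j$ of $B$, then $|\delta_{K_{n,n}}(S)|=i(n-j)+j(n-i)=n(i+j)-2ij$, since each $A$-vertex of $S$ sends edges exactly to the $n-j$ $B$-vertices outside $S$, and symmetrically. For the upper bound I would exhibit $(T,\chi)$ with an empty root $r$ and $n$ leaf children $t_1,\dots,t_n$ where $\chi(t_i)=\{a_i,b_i\}$: here $V_{t_i}$ has $i=j=1$, so $|\adhesion(t_i)|=n\cdot 2-2=2n-2$ and $\torsow(t_i)=2+1=3\le 2n-2$, while $|\adhesion(r)|=0$ and $\torsow(r)=0+n=n\le 2n-2$. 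This gives $\tcw(K_{n,n})\le 2n-2$.

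The substance is the lower bound. Assume for contradiction a decomposition $(T,\chi)$ of width $w\le 2n-3$, taken without unnecessary nodes as in the convention preceding Lemma~\ref{lem:tcw-comp}. The key step is to show that \emph{every} edge of $T$ separates off a single vertex, i.e.\ for the child endpoint $t$ either $V_t$ or $V(G)\setminus V_t$ is a singleton. By the part-swap and complementation symmetries ($\adhesion(t)$ depends only on $\{V_t,V(G)\setminus V_t\}$), I may assume $V_t$ meets both sides and has size $s=i+j$ with $2\le s\le n$. Using $ij\le s^2/4$, the cut formula yields $|\adhesion(t)|=ns-2ij\ge ns-s^2/2=s(2n-s)/2$, which is increasing for $2\le s\le n$ and equals $2n-2$ at $s=2$; hence $|\adhesion(t)|\ge 2n-2>w$ for every mixed set of size in $[2,2n-2]$, while a set of $\ge 2$ vertices lying in a single side has cut $ni\ge 2n$. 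Thus the only cuts of size at most $w$ are those isolating one vertex, proving the claim.

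Finally I would run the torsowidth argument as for $K_n$. The convention forces every leaf to hold a single vertex, since a leaf holding a co-singleton of $2n-1$ vertices would have $\torsow=2n>w$. Let $u$ be a deepest non-leaf node; by depth-maximality all its children are single-vertex leaves. Since $u$ has a child and (by the convention) is nonempty whenever it has a single child, $|V_u|\ge 2$, so $V_u$ cannot be the singleton side of its parent edge; hence $V_u=V(G)\setminus\{v\}$ of size $2n-1$, or $V_u=V(G)$ of size $2n$ when $u=r$. Writing $c$ for the number of children, $|V_u|=|\chi(u)|+c$, so $\torsow(u)=|\chi(u)|+\deg_T(u)=|V_u|+(\deg_T(u)-c)$, which is $2n+0$ when $u=r$ and $(2n-1)+1$ when $u\ne r$; either way $\torsow(u)=2n>w$, the desired contradiction.

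I expect the main obstacle to be the cut-size characterisation in the lower bound: pinning down exactly which vertex sets have adhesion at most $2n-3$ and verifying that the minimum $2n-2$ over mixed sets is attained (only at the forbidden two-vertex mixed set), which is what rules out all cuts except single-vertex ones. Once that is in place, the structural and torsowidth portion is a routine adaptation of the proof of Lemma~\ref{lem:tcw-comp}, with the empty-bag complications handled uniformly by the no-unnecessary-nodes convention.
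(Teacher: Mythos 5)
Your proof is correct and follows essentially the same route as the paper: the same star decomposition $\chi(t_i)=\{a_i,b_i\}$ for the upper bound, and for the lower bound the observation that every tree edge must isolate a single vertex, forcing a node of torsowidth $2n$. You are in fact more explicit than the paper, which merely asserts the adhesion claim where you verify it via the cut formula $n(i+j)-2ij$ and the AM--GM bound, and which claims $T$ is a star where you work with a deepest non-leaf node.
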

\begin{proof}
Let $V(K_{n,n})=\{a_1,\dots,a_n,b_1,\dots,b_n\}$ and $E(K_{n,n}=\SB a_ib_j \SM i,j\in [n] \SE$. We obtain a treecut decomposition of $K_{n,n}$ of width $2n-2$ as follows: $T'$ is a star with center $r$ and leaves $t_1,\dots,t_n$, where $\chi'(r)=\emptyset$ and $\chi'(t_i)=\{a_i,b_i\}$ for all $i\in [n]$.

Now, assume for a contradiction that there exists a treecut decomposition $(T,\chi)$ of width smaller than $2n-2$. As before, we assume that $\chi(t)\neq \emptyset$ for each leaf $t$ of $T$, and similarly $\chi(t')\neq \emptyset$ for each parent $t'$ of a single leaf in $T$. Since $n\geq 3$, in order not to exceed the bound on the width due to adhesion, for each edge $e$ of $T$ it must hold that one tree in $T-e$ must contain a single node (i.e., a leaf) $u$ and $|\chi(u)|=1$. As before, from this it follows that $T$ must be a star (with center $r$). However, since the torsowidth of $r$ is equal to $|\chi(r)|$ plus the number of leaves (each representing a single vertex of $K_n$), we see that in fact $\torsow(r)=2n$, a contradiction.
%
\qed
\end{proof}
  }
\subsection{Treedepth}

\newcommand{\undC}{C}

The second decompositional parameter for which we will introduce a
SAT-encoding is treedepth~\cite{NesetrilMendez12}.
Treedepth is closely related to treewidth, and the structure of graphs
of bounded treedepth is well understood~\cite{NesetrilMendez12}.
A useful way of thinking about graphs of bounded treedepth is that they are
(sparse) graphs with no long paths.

      
      


      
      



The \emph{treedepth} of an undirected graph $G$, denoted by $\td(G)$,
is the smallest natural number $k$ such
that there is an undirected rooted forest $F$ with vertex set $V(G)$ of height
at most $k$ for which $G$ is a subgraph of $\undC(F)$, where
$\undC(F)$ is called the \emph{closure} of $F$ and is the undirected graph with vertex set $V(F)$ having an
edge between $u$ and $v$ if and only if $u$ is an ancestor of $v$ in
$F$. A forest $F$ for which $G$ is a subgraph of $\undC(F)$ is also
called a \emph{treedepth decomposition}, whose \emph{depth} is equal
to the height of the forest.
Informally a graph has treedepth at most $k$ if it can be
embedded in the closure of a forest of height $k$. Note that if $G$ is
connected, then it can be embedded in the closure of a tree instead of
a forest. A treedepth decomposition of the Peterson graph is
illustrated in Figure~\ref{fig:tcw-ex}.

\noindent We conclude with some useful facts about treedepth. 

\begin{lemma}[\cite{NesetrilMendez12}]
  \label{pro:tdfacts}
  For every graph $G$, $\tw(G)\leq \td(G)$ and $\pw(G) \leq \td(G)$, where $\pw(G)$
  is the pathwidth of $G$.
\end{lemma}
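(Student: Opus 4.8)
The plan is to read off both a tree decomposition and a path decomposition directly from an optimal treedepth decomposition. Fix a rooted forest $F$ with $V(F)=V(G)$, height $\mheight(F)=k=\td(G)$, and $G\subseteq\undC(F)$. The single structural fact that drives everything is that, by definition of the closure $\undC(F)$, every edge of $G$ joins a vertex to one of its ancestors in $F$; hence both endpoints of any edge of $G$ lie together on some root-to-leaf path of $F$, and each such path has at most $k$ vertices.

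For the bound $\tw(G)\le\td(G)$ I would build a tree decomposition whose underlying tree is $F$ itself. To each node $v\in V(F)$ I assign the bag $B_v$ consisting of $v$ together with all of its ancestors in $F$ (equivalently, the vertices on the root-to-$v$ path). Then $|B_v|\le k$ for every $v$, so the maximum bag size is $k$ and the width is at most $k-1\le\td(G)$. Verifying the decomposition axioms is routine: each $v$ lies in $B_v$; for an edge $uv$ with $u$ an ancestor of $v$ we have $u,v\in B_v$; and for the connectivity axiom note that $w\in B_v$ exactly when $v$ lies in the subtree of $F$ rooted at $w$, and that subtree is connected in $F$. If $F$ is a forest rather than a tree, I connect its component trees into a single tree by adding edges between their roots in the decomposition; since no bag is altered and the bags containing any fixed $w$ stay within a single (still connected) component, all axioms are preserved.

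For the bound $\pw(G)\le\td(G)$ the cleanest route is to reuse the same bags but linearise them. I would take one bag per leaf $\ell$ of $F$, namely the root-to-$\ell$ path $B_\ell$, and order these bags according to a depth-first (left-to-right) traversal of $F$. Again $|B_\ell|\le k$, giving width at most $k-1\le\td(G)$. Vertex and edge coverage follow as before, since every vertex and every edge lies on some root-to-leaf path. The only genuinely new point is the interval property: a vertex $w$ occurs in $B_\ell$ precisely when $w$ is an ancestor of $\ell$, i.e.\ when $\ell$ is a leaf of the subtree of $F$ rooted at $w$, and under a DFS ordering the leaves of any subtree occupy a contiguous block of the sequence. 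Alternatively one can simply invoke the standard inequality $\tw(G)\le\pw(G)$ and derive the treewidth bound from the path decomposition alone.

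I expect the main obstacle to be exactly this interval (connectivity) property for the path decomposition: it is the one place where the choice of ordering matters, and it fails for an arbitrary ordering of the leaves, so the DFS-contiguity argument is what makes it work. That same argument also subsumes the forest case, since leaves of distinct components of $F$ are never interleaved and each vertex belongs to a single component, so concatenating the per-component orderings suffices. Everything else --- the size bounds and the coverage axioms --- is immediate from the closure definition.
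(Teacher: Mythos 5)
Your proof is correct. The paper itself offers no argument for this lemma --- it is imported as a known fact from Ne\v set\v ril and Ossona de Mendez --- so there is nothing internal to compare against; what you give is the standard proof, and all the delicate points are handled properly: the bags $B_v$ of ancestors have size at most $\mheight(F)=\td(G)$, the closure condition guarantees edge coverage, the connectivity axiom holds because the nodes whose bags contain $w$ form exactly the subtree rooted at $w$, and the DFS ordering of leaves is precisely what makes the interval property work for the path decomposition (and disposes of the forest case). In fact your construction yields the slightly stronger $\tw(G)\le\pw(G)\le\td(G)-1$, which of course implies the stated bounds.
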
 

\section{Treecut Width}\label{sec:tcw}

In~this section we will introduce our encoding for treecut width. The
encoding is based on a different characterisation of treecut
width, one that is well-suited for SAT-encodings.

\subsection{Partition-Based Formulation}\label{sec:part-ch}

Here we present a partition-based characterisation of treecut width,
in terms of what we call derivations, which is well-suited for an encoding into SAT. 
Let $G$ be a graph. A \emph{derivation} $\der$ of $G$ of \emph{length}
$l$ is a sequence $(P_1,\dotsc,P_l)$ of weak partitions of $V(G)$ such
that:
\begin{itemize}
\item[D1] $P_1=\emptyset$ and $P_l=\{\{V(G)\}\}$ and
\item[D2] for every $i\in \{1, \dots, l\}$, $P_i$ is a refinement of
  $P_{i+1}$.
\end{itemize}

\newcommand{\bagder}[3]{\chi_{#1}^{#2}(#3)}
\newcommand{\chder}[3]{\mathsf{c}_{#1}^{#2}(#3)}
\newcommand{\torwder}[3]{\torsow_{#1}^{#2}(#3)}

We will refer to $P_i$ as
the $i$-th \emph{level} of the derivation $\der$ and we will refer to
elements in $\bigcup_{1\leq i\leq l}P_i$ as \emph{sets} of the
derivation. Let $p \in P_i$ for some level $i$ with $1 \leq i \leq
l$. We say that a set $c \in P_{i-1}$ is a \emph{child} of $p$ at level $i$
if $c\subseteq p$ and denote by $\chder{\der}{i}{p}$ the set of all
children of $p$ at level $i$. Moreover, we denote by
$\bagder{\der}{i}{p}$ the set $p \setminus (\bigcup_{c \in
  \chder{\der}{i}{p}}c)$. Then the \emph{width} of $p$ at level $i$ is
equal to the maximum of $|\delta_G(p)|$ and
$\torwder{\der}{p}{i}$, where $\torwder{\der}{p}{i}$ is equal to
$|\bagder{\der}{i}{p}|+|\chder{\der}{i}{p}|+1$ if $i \neq l$ and 
equal to $|\bagder{\der}{i}{p}|+|\chder{\der}{i}{p}|$ otherwise.
We will show that any treecut decomposition can be
transformed into a derivation of the same width, and vice versa.
The following example illustrates the close connection
between treecut decompositions and derivations.

\medskip 
\noindent \emph{Example:}
  The treecut decomposition given in
  Fig.~\ref{fig:tcw-ex} of the Petersen graph can be translated into
  the derivation $\mathcal{P}=(P_1,\dotsc,P_3)$ defined by:
  \smallskip

\noindent    $P_1=\emptyset$, $P_2=\Big\{\big\{0,5,6\big\},\big\{1\big\},\big\{2,4,7\big\},\big\{3,8,9\big\}\Big\}$, \\
$P_3=\Big\{\big\{0,1,2,3,4,5,6,7,8,9\big\}\Big\}$.


  \smallskip
  \noindent As can be verified easily, the width of $\mathcal{P}$ is equal to $5$.
\medskip

We show that derivations provide an alternative
characterisation of treecut decompositions.
\lv{\begin{theorem}}
  \sv{\begin{theorem}[$\star$]}
  \label{the:equi-tc-der}
  Let $G$ be a graph and $\omega$ and $d$ two integers. 
  $G$ has a treecut decomposition of
  width at most $\omega$ and height at most $d$ if and only if $G$ has a derivation
  of width at most $\omega$ and length at most $d+1$.
\end{theorem}
\lv{\begin{proof}
  Let $(T,\chi)$ be a treecut decomposition of $G$ of width at most
  $\omega$ and height at most $d$; without loss of generality, we assume that $\chi(t)\neq \emptyset$ for each leaf $t$ of $T$. 
  It is immediate from the
  definitions that
  $\der=(P_1,\dotsc,P_{\mheight(T)+1})$ such that $P_1=\emptyset$ and $P_i=\SB V_t \SM t \in
  V(T) \land \mheight_T(t)=\mheight(T)-i+2\SE$ for every $i$ with $2\leq i \leq \mheight(T)+1$
  is a derivation of $G$
  with width at most $\omega$ and length at most $d+1$.

  Towards showing the converse, let $\der=(P_1,\dotsc,P_l)$ be a
  derivation of $G$ with width at most $\omega$.
  It is immediate from the definitions that $(T,\chi)$ such that:
  \begin{itemize}
  \item $T$ is the tree with a vertex for every pair $(p,i)$ such that $p
    \in P_i$ having an edge between $(c,i-1)$ and $(p,i)$ if $c$ is a
    child of $p$ at level $i$, and
  \item $\chi((p,i))=\bagder{\der}{i}{p}$ for every level $i$ and $p
    \in P_i$
  \end{itemize}
  is a treecut decomposition of $G$ with width at most $\omega$ and height
  at most $d-1$.
  \qed
\end{proof}
}

\subsection{Encoding}

Let $G$ be a graph with $m$ edges and $n$ vertices, and let $\omega$
and $d$ be positive integers. We will assume that the vertices of $G$
are represented by the numbers from $1$ to $n$ and the edges of $G$ by
the numbers from $1$ to $m$.  The aim of this section is to construct
a formula $F(G,\omega,d)$ that is satisfiable if and only if $G$ has a
derivation of width at most $\omega$ and length at most $d$. 
Because of
Theorem~\ref{the:equi-tc-der} (after setting $d$ to $n$) it holds that $F(G,\omega,d)$ is satisfiable if and only if
$G$ has treecut width at most $\omega$.  To achieve this aim we first
construct a formula $F(G,d)$ such that every satisfiable assignment encodes a derivation
of length at most $d$ and then we extend this formula
by adding constrains that restrict the width of the derivation to $\omega$.

\subsubsection{Encoding of a Derivation}\label{sec:tcw-encoding}

The formula $F(G,d)$ uses the following variables.  A \emph{set
  variable} $\comp(u,v,i)$, for every $u,v \in V(G)$ with $u\leq v$ and
every $i$ with $1\leq i \leq d$.  Informally, $\comp(u,v,i)$ is true
whenever $u$ and $v$ are contained in the same set at level $i$ of the
derivation. Note that $\comp(u,u,i)$ is true whenever $u$ is contained
in some set at level $i$. Furthermore, the formula contains a \emph{leader variable} $\led(u,i)$, for every
$u \in V(G)$ and every $i$ with $1\leq i \leq d$.  Informally, the
leader variables will be used to uniquely identify the sets at each
level of a derivation (using the smallest vertex contained in the set
as the unique identifier), i.e., $\led(u,i)$ is true whenever $u$ is the
smallest vertex in a set at level $i$ of the derivation.

We now describe the clauses of the formula.
The following clauses ensure (D1) and (D2).
\begin{tabbing}
  \quad \quad \=  \kill
  \> $\neg \comp(u,v,1) \wedge \comp(u,v,d)$ \` for $u,v \in V(G)$, $u \leq v$\\
 \> $\neg \comp(u,v,i)\vee \comp(u,v,i+1)$ \` ~ \\
  \> {\color{white}*******} \`for $u,v \in V(G)$, $u\leq v$, $1 \leq i < d$  
\end{tabbing}

\noindent The following clauses ensure that if a vertex $v$ is in some
set with at least one other vertex at
level $i$, then $\comp(v,v,i)$ is true.
\begin{tabbing}
  \quad \quad \= \kill
  \> $(\neg \comp(u,v,i) \vee \comp(u,u,i))\wedge (\neg \comp(u,v,i)
  \vee \comp(v,v,i))$ \` ~\\
  \> {\color{white}*******} \` for $u,v \in V(G)$, $u< v$, $2 \leq i \leq d$\\
\end{tabbing}
\vspace{-6mm}
\noindent The following clauses ensure that the relation of being in the
same set is transitive.
\begin{tabbing}
  \quad \quad \= \kill
  \> $(\neg \comp(u,v,i)\vee\neg \comp(u,w,i)\vee \comp(v,w,i))$ \\
  \> $\wedge (\neg \comp(u,v,i)\vee \neg \comp(v,w,i)\vee \comp(u,w,i))$\\
  \> $ \wedge (\neg \comp(u,w,i)\vee\neg  \comp(v,w,i)\vee \comp(u,v,i))$ \\
    \> {\color{white}*******} \` for $u,v,w \in V(G)$, $u<v<w$, $1 \leq i \leq d$\\
\end{tabbing}

\noindent The following clauses ensure that $\led(u,i)$ is true if and only if $u$ is
the smallest vertex contained in some set at level $i$ of a derivation.
\begin{tabbing}
  \quad \quad \= \kill
  \> (A) \quad $(\led(u,i) \vee \neg \comp(u,u,i) \vee \bigvee\limits_{v\in
    V(G),v<u}\comp(v,u,i))\wedge$\\
  \> (B) \quad $(\neg \led(u,i) \vee \comp(u,u,i)) \wedge \bigwedge\limits_{v\in
    V(G),v<u}(\neg {\led}(u,i)\vee$ \\ 
    \> {\color{white} (*)} \quad $\neg \comp(v,u,i))$ 
\end{tabbing}
\vspace{-6mm}
\phantom{x}\hfill for $u \in V(G)$, $1 \leq i \leq d$

\noindent Part $A$ ensures that if $u$ is contained in some set at
level $i$ and no vertex smaller than $u$ is contained in a set with
$u$ at level $i$, then $u$ is a leader. Part $B$ ensures that if $u$
is a leader at level $i$, then $u$ is contained in some set at level
$i$ and furthermore no vertex smaller than $u$ at level $i$
is in the same set as $u$.
The
formula $F(H,d)$ contains at most $\bigO(n^2d)$ variables and
$\bigO(n^3d)$ clauses.


\subsubsection{Encoding of a Derivation of Bounded Width}

\newcommand{\adh}{\textsf{ad}}
\newcommand{\tor}{\textsf{tor}}

Next,~we describe how $F(G,d)$ can be extended to restrict the width of
the derivation. Towards this aim we first need new
variables allowing us to define adhesion and torsowidth. Namely, for
every $u \in V(G)$, $e \in E(G)$ such that at least one of the
endpoints of $e$ is larger or equal to $u$, and $i \in \{2,\dotsc,d-1\}$, we use
the variable $\adh(u,e,i)$, which will be true if $u$ is a leader of
some set $V'$ at level $i$ and $e \in \delta_G(V')$. This is
ensured by the following clauses. 
\begin{tabbing}
  \quad \quad \= \kill
  \> $\neg \led(u,i) \vee \neg \comp(u,v,i) \vee \comp(u,w,i) \vee 
\adh(u,e,i)$
\end{tabbing}
\vspace{-6mm}
\phantom{x}\hfill for $u,v,w \in V(G)$, $e=\{v,w\}\in E(G)$, $u \leq v$, \\
\phantom{x}\hfill $u\leq w$, and $1 < i < d$.

\begin{tabbing}
  \quad \quad \= \kill
  \> $\neg \led(u,i) \vee \neg \comp(u,v,i) \vee 
\adh(u,e,i)$
\end{tabbing}
\vspace{-6mm}
\phantom{x}\hfill for $u,v,w \in V(G)$, $e=\{v,w\}\in E(G)$, $u \leq
v$, \\
\phantom{x}\hfill $w < u$, and $1 < i < d$.

Note that we
do not require the reverse direction here since the sole purpose of
the variables $\adh(u,e,i)$ is to ensure that the adhesion never
exceeds the width. 

Towards defining torsowidth, we introduce the variable $\tor(u,v,i)$
for every $u,v \in V(G)$, $u \leq v$, and $1 \leq i \leq d$, which
will be true, whenever $u$ is a leader of a set $V'$ at level $i$, $v$
is in $V'$, and either $v$ is a leader at level $i-1$, or
$v$ is not in a set at level $i-1$.
This is
ensured by the following clauses. 
\begin{tabbing}
  \quad \quad \= \kill
  \> $\neg \led(u,i) \vee \neg \comp(u,v,i) \vee \neg \led(v,i-1) \vee 
\tor(u,e,i)$ \` \\
\> {\color{white}**} \` for $u,v \in V(G)$, $u \leq v$, and $2 < i \leq d$\\
  \> $\neg \led(u,i) \vee \neg \comp(u,v,i) \vee \neg \comp(v,v,i-1) \vee 
\tor(u,e,i)$ \`\\
\> {\color{white}**} \` for $u,v \in V(G)$, $u \leq v$, and $1 < i \leq d$
\end{tabbing}

\lv{Note that after adding the above variables and clauses defining adhesion and
torsowidth, our formula has at most at most $\bigO(n^2d+nmd)$ variables and
$\bigO(n^3d)$ clauses.}

Finally, we use the sequential counter introduced in Section~\ref{ssec:pre-formula} to ensure that
both the adhesion as well as the torsowidth do not exceed the given
width. Namely, for every $u \in V(G)$ and $i$ with $1 < i < d$, we
ensure that the number of variables in $\SB \adh(u,e,i) \SM e \in E(G)
\SE$ that are set to true does not exceed $\omega$. Similarly for every $u \in V(G)$
and $i$ with $1 < i < d$, we ensure that the number of variables in
$\SB \tor(u,v,i)\SM v \in V(G) \land u\leq v\SE$ that are set to true
does not exceed $\omega-1$ and that the number of variables in $\SB
\tor(u,v,d)\SM v \in V(G) \land u\leq v\SE$ does not exceed $\omega$.

\sloppypar
This completes the construction of $F(G,\omega,d)$.
By construction, $F(G,\omega,d)$ is satisfiable
if and only $G$ has a derivation of width at most $\omega$ and length at most
$d$. Due to Theorem~\ref{the:equi-tc-der}, we obtain:
\begin{theorem}\label{the:formula-tc}
  The formula $F(G,\omega,d)$ is satisfiable if and only if $G$ has a treecut
  decomposition of width at most $\omega$ and depth at most $d$. Moreover,
  such a treecut decomposition can be constructed from a satisfying
  assignment of $F(G,\omega,d)$ in linear time w.r.t.\ the number of
  variables of $F(G,\omega,d)$. 
\end{theorem}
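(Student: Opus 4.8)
The plan is to prove the theorem in two stages. First I would establish the intermediate claim already announced in the text, namely that $F(G,\omega,d)$ is satisfiable if and only if $G$ admits a derivation of width at most $\omega$ and length at most $d$; second I would invoke Theorem~\ref{the:equi-tc-der} to pass from derivations to treecut decompositions, tracking the length-to-height correspondence. The bulk of the work is the first stage, which I would prove by exhibiting an explicit correspondence between satisfying assignments and derivations.

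For the direction ``assignment $\Rightarrow$ derivation'', I would start from a satisfying assignment $\alpha$ and, for each level $i$, define a relation on the vertices that are \emph{active} at level $i$ (those $u$ with $\comp(u,u,i)$ true) by setting $u\sim_i v$ iff $\comp(\min(u,v),\max(u,v),i)$ holds. The reflexivity-forcing clauses, the symmetry built into using a single variable per unordered pair, and the three transitivity clauses together show $\sim_i$ is an equivalence relation on the active vertices, so its classes form a weak partition $P_i$. The clauses enforcing (D1) give $P_1=\emptyset$ and $P_d=\{V(G)\}$, and the monotonicity clauses enforcing (D2) give that each $P_i$ refines $P_{i+1}$; hence $\der=(P_1,\dots,P_d)$ is a derivation. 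Parts (A)/(B) then show $\led(u,i)$ is true precisely when $u$ is the minimum vertex of its class, so leaders serve as canonical names for sets. The width bound comes from the auxiliary variables: the $\adh$-clauses force $\adh(u,e,i)$ true whenever $u$ is a leader and $e\in\delta_G(V')$ for the class $V'$ it names (checking both the $w\ge u$ and $w<u$ labelings of an edge $e=\{v,w\}$), so the sequential counter limiting true $\adh(u,\cdot,i)$ to $\omega$ yields $|\delta_G(V')|\le\omega$; symmetrically, the $\tor$-clauses force $\tor(u,v,i)$ true exactly when $v\in V'$ and $v$ either newly appears at level $i$ or is the leader (hence representative of a child class) at level $i-1$, so the counter bound ($\omega-1$ for $i<d$, $\omega$ for $i=d$) reproduces the definition $|\bagder{\der}{i}{p}|+|\chder{\der}{i}{p}|$, plus one for internal levels, of torsowidth. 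Together these give width at most $\omega$.

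For the converse I would, given a derivation of width at most $\omega$ and length at most $d$, define the natural assignment: set the $\comp$ and $\led$ variables to mirror the classes and their minima, and set $\adh,\tor$ to their forced (minimal-model) values. Verifying each clause group is then routine, and the sequential counters are satisfiable precisely because adhesion and torsowidth respect the width bound. The second stage is a direct application of Theorem~\ref{the:equi-tc-der}: a derivation of length at most $d$ corresponds to a treecut decomposition of the same width and of height bounded as in that theorem, which settles the equivalence. The ``moreover'' part follows because the explicit map of Theorem~\ref{the:equi-tc-der} sending each pair $(p,i)$ to a tree node with bag $\bagder{\der}{i}{p}$ can be evaluated in linear time once the $P_i$ have been read off from the set variables.

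I expect the main obstacle to be the torsowidth bookkeeping in the width argument: one must check that every active $v\in V'$ is charged to the count in exactly one way---either as a genuinely new vertex of the bag or as the unique representative of a child class---with no omission and no double counting, using the refinement property to argue that a set at level $i-1$ meeting $V'$ lies entirely inside $V'$. One must also verify that the new-vertex case ($\comp(v,v,i-1)$ false) and the child-leader case ($\led(v,i-1)$ true) are genuinely disjoint, and handle the boundary levels $i=1$ and $i=d$ together with the $+1$ discrepancy between internal and top-level torsowidth. This tightness is exactly what makes the one-directional implications used for $\adh$ and $\tor$ sufficient, since the counters only need to bound these quantities from above.
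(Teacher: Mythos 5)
Your proposal follows exactly the paper's route: the paper itself gives no detailed proof, merely asserting that ``by construction'' $F(G,\omega,d)$ is satisfiable iff $G$ has a derivation of width at most $\omega$ and length at most $d$, and then invoking Theorem~\ref{the:equi-tc-der} to translate between derivations and treecut decompositions. Your fleshed-out correspondence between satisfying assignments and derivations (equivalence classes from the $\comp$ variables, leaders as canonical names, and the $\adh$/$\tor$ counters bounding adhesion and torsowidth) is a correct elaboration of that same argument.
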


\section{Treedepth}

In~this section we introduce a SAT-encoding for treedepth, which is
also based on partitions. We also developed an encoding for treedepth
that is based on guessing the tree of the treedepth decomposition,
however, the encoding has, to our surprise, performed much worse than
the partition-based encoding. Namely, the encoding, which we introduce for
completeness in \sv{the full version of the paper}\lv{Section~\ref{asec:td-enc-2}}, only terminated on 17 out of the 39
famous graphs.

\subsection{Partition-Based Formulation}

Let $G$ be a graph. We will base our definition of derivations for
treedepth on the derivations defined for treecut width in
Section~\ref{sec:part-ch}. A \emph{derivation} $\der$ of $G$ of length
$l$ is a sequence $(P_1,\dotsc,P_l)$ of weak partitions of $V(G)$
satisfying (D1) and (D2) and additionally the following properties:
\begin{itemize}
\item[(D3)] for every $p \in P_i$, $|\bagder{\der}{i}{p}|\leq 1$, and
\item[(D4)] for every edge $\{u,v\} \in E(G)$, there is a $p\in P_i$
  such that $\{u,v\} \subseteq p$ and $\bagder{\der}{i}{p}\cap
  \{u,v\}\neq \emptyset$.
\end{itemize}
It will be useful to recall the notions defined for derivations in Section~\ref{sec:part-ch}.
We will show that any treedepth decomposition of depth $\omega$ can be
transformed into a derivation of length $\omega+1$, and vice versa.
The following example illustrates the connection
between treedepth and such derivations.

\medskip
\noindent \emph{Example:}
  The treedepth decomposition given in
  Fig.~\ref{fig:tcw-ex} of the Petersen graph can be translated into 
  the derivation $\mathcal{P}=(P_1,\dotsc,P_7)$ defined by:

  \smallskip
$P_1=\emptyset$, $P_2=\Big\{\big\{4\big\},\big\{8\big\},\big\{5\big\}\Big\}$,

$P_3=\Big\{\big\{2,4\big\} \big\{3,8\big\},\big\{5,6\big\}\Big\}$,

$P_4=\Big\{\big\{1,2,3,4,5,6,8\big\}\Big\}$, 
  
  $P_5=\Big\{\big\{0,1,2,3,4,5,6,8\big\}\Big\}$
  
  $P_6=\Big\{\big\{0,1,2,3,4,5,6,7,8\big\}\Big\}$, 
  
  $P_7=\Big\{\big\{0,1,2,3,4,5,6,7,8,9\big\}\Big\}$.
\medskip

\noindent The next theorem shows that such derivations provide an alternative
characterisation of treedepth.
\lv{\begin{theorem}}
\sv{\begin{theorem}[$\star$]}
  \label{the:equi-td-der}
  Let $G$ be a connected graph and $\omega$ an integer.
  $G$ has a treedepth decomposition of
  depth at most $\omega$ if and only if $G$ has a derivation
  of length at most $\omega+1$.
\end{theorem}
\lv{\begin{proof}
  Let $T$ be a treedepth decomposition of $G$. It is immediate from the
  definitions that
  $\der=(P_1,\dotsc,P_{\mheight(T)+1})$ such that $P_1=\emptyset$ and $P_i=\SB V(T_t) \SM t \in
  V(T) \land \mheight_T(t)=\mheight(T)-i+2\SE$ for every $i$ with $2\leq i \leq \mheight(T)+1$
  is a derivation of $G$
  whose length is equal to the height of $T$ plus~1.

  Towards showing the converse, let $\der=(P_1,\dotsc,P_l)$ be a
  derivation of $G$.
  Note first that w.l.o.g. we can assume that $\bagder{\der}{i}{p}\neq
  \emptyset$ for every $i$ with $ 1\leq i \leq l$ and $p\in P_i$; this
  is because if this is not the case we can replace $p$ in $P_i$ with
  all its children in $P_{i-1}$. 
  Note also that for every $v \in V(G)$, there is exactly one $p \in P_i$ such that
  $\{v\}=\bagder{\der}{i}{p}$, which we will call the set of $\der$ introducing
  $v$.
  Let $T$ be the tree with vertex set
  $V(G)$ having an edge between $u$ and $v$, whenever the set
  introducing $u$ is a child of the set introducing $v$ or vice versa 
  and let $\bagder{der}{l}{r}$ with $r \in P_l$ be the root of $T$. It
  is straightforward to verify that $T$ is a treedepth decomposition
  of $G$ with depth at most $l-1$. 
  \qed
\end{proof}
}
\subsection{Encoding of a Derivation}
Here we construct the formula $F(G,\omega)$ that is satisfiable if and only if $G$ has a
derivation of length at most $\omega$, which together with
Theorem~\ref{the:equi-td-der} implies that $G$ has treedepth
$\omega-1$. Since we are again using derivations, the formula
$F(G,\omega)$ is relatively similar to the formula $F(G,d)$ introduced
in Section~\ref{sec:tcw-encoding}. In particular, we again have a \emph{set
  variable} $\comp(u,v,i)$, for every $u,v \in V(G)$ with $u\leq v$ and
every $i$ with $1\leq i \leq \omega$, which has the same semantics as
before. It also contains all the clauses introduced in
Section~\ref{sec:tcw-encoding} apart from the clauses restricting the
leader variables. Additionally, we have the following clauses, which ensure that (D3) holds, i.e., that there is at most one vertex in $\bagder{\der}{i}{p}$.
\begin{tabbing}
\quad\quad\=\kill
\>$\neg\comp(u,v,i)\vee\comp(u,u,i-1)\vee\comp(v,v,i-1)$\`\\
\>{\color{white}***}\` for all $u,v\in V$, 
$u<v$, and $2\leq i\leq\omega$
\end{tabbing}
Finally, the following clauses ensure (D4).
\begin{tabbing}
 \quad\quad\=\kill
 \>$\neg\comp(u,u,i)\vee\neg\comp(v,v,i)\vee\comp(u,u,i-1)\vee\comp(u,v,i)$\\
 
\>$\neg\comp(u,u,i)\vee\neg\comp(v,v,i)\vee\comp(v,v,i-1)\vee\comp(u,v,i)$\`\\

\>{\color{white}***}\` for $uv\in E$, $u<v$, and $2\leq i\leq\omega$
\end{tabbing}
This completes the construction of the formula $F(G,\omega)$\lv{, which
has at most $\bigO(n^2)$ variables and at most $\bigO(n^3\omega)$
clauses}.
By construction, $F(G,\omega)$ is satisfiable
if and only $G$ has a derivation of length at most $\omega$. Because
of Theorem~\ref{the:equi-td-der}, we obtain:
\begin{theorem}
  The formula $F(G,\omega)$ is satisfiable if and only if $G$ has a treedepth
  at most $\omega-1$. Moreover,
  a corresponding treedepth decomposition can be constructed from a satisfying
  assignment of $F(G,\omega)$ in linear time in terms of the number of
  variables of $F(G,\omega)$. 
\end{theorem}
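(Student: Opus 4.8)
The plan is to read the theorem off from the two facts that immediately precede it: the stated correspondence ``$F(G,\omega)$ is satisfiable if and only if $G$ has a derivation of length at most $\omega$'', together with Theorem~\ref{the:equi-td-der}. Concretely, once the former is established, instantiating Theorem~\ref{the:equi-td-der} with its depth parameter set to $\omega-1$ shows that a derivation of length at most $\omega=(\omega-1)+1$ exists if and only if $G$ admits a treedepth decomposition of depth at most $\omega-1$, i.e.\ if and only if $\td(G)\le\omega-1$. So the real content lies in verifying the ``by construction'' claim, which I would split into the two directions of the equivalence plus the effectiveness of the translation.

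For the forward direction, given a satisfying assignment $\tau$, I would define a candidate derivation $\der=(P_1,\dots,P_\omega)$ by declaring the sets at level $i$ to be the equivalence classes of the relation $u\sim_i v \iff \tau(\comp(u,v,i))=1$ (using the ordering convention $u\le v$), taken over the vertices that are \emph{present} at level $i$, namely those $u$ with $\tau(\comp(u,u,i))=1$. The transitivity clauses and the two ``presence'' clauses inherited from Section~\ref{sec:tcw-encoding} guarantee that $\sim_i$ is an equivalence relation on the present vertices, so each $P_i$ is a genuine weak partition. It then remains to check the four defining properties: D1 follows from the clauses $\neg\comp(u,v,1)$ and $\comp(u,v,\omega)$; D2 follows from the monotonicity clauses $\neg\comp(u,v,i)\vee\comp(u,v,i+1)$; D3 follows from the new clauses $\neg\comp(u,v,i)\vee\comp(u,u,i-1)\vee\comp(v,v,i-1)$, after observing that $\bagder{\der}{i}{p}$ consists exactly of the vertices present at level $i$ but absent at level $i-1$, so the clause forbids two such ``new'' vertices from sharing a set; and D4 follows from the two edge clauses.

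For the backward direction, given a derivation $\der$ I would set $\tau(\comp(u,v,i))=1$ precisely when $u$ and $v$ lie in a common set of $P_i$, and then verify clause by clause that all the above families are satisfied; this is routine and parallels the construction inside Theorem~\ref{the:equi-td-der}. For the ``moreover'' part, a satisfying assignment is turned into the derivation above by a single pass over the $\bigO(n^2)$ set variables, and the constructive direction of Theorem~\ref{the:equi-td-der} (the explicit forest built from the sets introducing each vertex) then yields the treedepth decomposition, again in time linear in the number of variables of $F(G,\omega)$.

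The step I expect to require the most care is showing that the two D4 clauses are genuinely \emph{equivalent} to property D4 of a derivation, rather than merely implied by it. The clauses say that whenever both endpoints of an edge $uv$ are present at level $i$ and one of them is new at that level (absent at level $i-1$), the two endpoints must already be merged at level $i$; using D2 and D3 I would argue that this is exactly the requirement that, at the first level at which both endpoints are present, the set containing them meets $\{u,v\}$ in its bag $\bagder{\der}{i}{p}$, which is what D4 demands. A secondary subtlety worth flagging is that, unlike the treecut encoding, this formula drops the leader variables, so I must confirm that the $\comp$ variables alone determine the weak partitions and that D3 is captured purely by implication clauses, with no cardinality counter; I would also reconcile the read-off derivation with the harmless nonemptiness normalisation $\bagder{\der}{i}{p}\neq\emptyset$ invoked inside the proof of Theorem~\ref{the:equi-td-der}.
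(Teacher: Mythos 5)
Your proposal is correct and follows essentially the same route as the paper, which justifies the theorem purely ``by construction'' of the clause families (establishing that satisfying assignments of $F(G,\omega)$ correspond to derivations of length at most $\omega$) and then applies Theorem~\ref{the:equi-td-der}; you merely make explicit the clause-by-clause verification of (D1)--(D4) and the first-common-level argument for (D4) that the paper leaves implicit.
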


\lv{
\subsection{A Treedepth Encoding Based on Tree-Structure}
\label{asec:td-enc-2}

In~this section we introduce our second encoding for treedepth that is
based on guessing the tree-structure of a treedepth decomposition.
Let $G$ be a graph with $n$ vertices and $m$ edges, and let $\omega$
be a positive integer. As before we will assume that the
vertices and edges of $G$ are represented by numbers from $1$ to $n$
respectively $m$. Our aim is to construct
a formula $F(G,\omega)$ that is satisfiable if and only if $G$ has a
treedepth at most $\omega$. 
Informally, our encoding guesses a treedepth decomposition of $G$ by
guessing the roots as well as the parent relation of the underlying
forest. Namely, using a \emph{root variable} $\ro(r)$ for every $r \in
V(G)$ the encoding
guesses all the roots of the forest and using a \emph{parent
  variable} $\pt(p,c)$ for every $p,c \in V(G)$ with $p \neq c$ it
guesses the parent $p$ for every non-root vertex $c$. To ensure the
properties of a treedepth decomposition, the encoding then ``computes'' the
transitive closure of the relation $\pt(p,c)$. For this purpose the encoding
uses a variable $\anc(a,d)$ for every $a,d \in V(G)$ that can only be
true if $(a,d)$ is in the transitive closure of the relation
$\pt(p,c)$.
These variables can then be employed to verify the remaining
properties of a treedepth decomposition as follows. First to verify
that $G$ is a subgraph of the closure of the guessed treedepth
decomposition, it is sufficient to verify that $\anc(u,v)$ or
$\anc(v,u)$ is true for every edge $\{u,v\} \in E(G)$. Moreover, to verify
that the height of the guessed decomposition does not exceed $\omega$,
it suffices to check that the number of vertices $a$ for which
$\anc(a,c)$ holds, is at most $\omega-1$ for every $c \in V(G)$.


We start by introducing the clauses that together ensure that the
assignment of the root and parent variables corresponds to a treedepth
decomposition (i.e., an undirected forest).
Towards this aim we introduce the following clauses.
\begin{tabbing}
  \quad \quad \=  \kill
  \> (A) \quad $\bigvee_{r \in V(G)}\ro(r)$\\
  \> (B) \quad $\ro(r) \vee \big(\bigvee_{p \in V(G) \land p \neq
    r}\pt(p,r)\big)$\` for $r \in V(G)$\\
  \> (C) \quad $\neg \pt(p,c) \vee \neg \pt(p',c)$\` for $p,p',c \in V(G)$,\\
  \> {\color{white} (C)} \quad \` $p<p'$, $c\neq p$, $c\neq p'$\\
  \> (D) \quad $\neg \ro(r) \vee \neg \anc(a,r)$\` for $r,a \in V(G)$, $r\neq a$
\end{tabbing}
Note that (A) ensures that there is at least one root,
(B) ensures that every non-root vertex has at least one
parent, (C) ensures that every vertex has at most one parent, and (D)
ensures that a root vertex does not have any ancestors. This almost
ensures that the assignment of the root and parent variables
corresponds to a forest, since it only remains to ensure that the
directed graph with vertex set $V(G)$ having an arc from $c$ to $p$ if
$\pt(p,c)$ is true is acyclic. We achieve this by forcing that the transitive
closure of $\pt(p,c)$ (represented by $\anc(a,d)$) is irreflexive and
anti-symmetric.
\begin{tabbing}
  \quad \quad \=  \kill
  \> $\neg \anc(a,a)$\` for all $a \in V(G)$\\
  \> $\neg \anc(u,v) \vee \neg \anc(v,u)$ \` for all $u,v \in V(G)$,
  $u \neq v$ 
\end{tabbing}
\noindent For the above clauses to work, it is crucial to ensure that the
relation $\anc(a,d)$ is equal to the transitive closure of the
relation $\pt(p,c)$. Towards this aim, we first introduce the
following clauses ensuring that the transitive closure of $\pt(p,c)$
is contained in the relation $\anc(a,d)$.
\begin{tabbing}
  \quad \quad \=  \kill
  \> $\neg \pt(p,c)\vee \anc(p,c)$ \` for all $p,c\in V(G)$, $p\neq c$\\
  \> $\neg \pt(p,c)\vee \neg \anc(p',p)\vee\anc(p',c)$\` \\
  \> {\color{white}***} \quad \` for all $p',p,c
  \in V(G)$, $p'\neq p$, $p\neq c$, $p'\neq c$
\end{tabbing}
To ensure the converse, i.e., that the relation $\anc(a,d)$ is
contained in the transitive closure of the relation $\pt(p,c)$, we
introduce the following clauses.
\begin{tabbing}
  \quad \quad \=  \kill
  \> $\neg \pt(p,c)\vee \neg \anc(p',c)\vee \anc(p',p)$\` \\
    \> {\color{white}***} \quad \` for all $p',p,c
  \in V(G)$, $p'\neq p$, $p\neq c$, $p'\neq c$
\end{tabbing}
\noindent By now it only remains to ensure that (1) the height of the
forest is at most $\omega$ and (2) every edge of $G$ is in the closure
of the forest. The former is
achieved by employing the sequential counter introduced in
Section~\ref{ssec:pre-formula} to force that $|\SB a \SM
\anc(a,l)\SE|$ is at most $\omega-1$ for every $l \in
V(G)$ and
the later is achieved using the following clauses.
\begin{tabbing}
  \quad \quad \=  \kill
  \> $\anc(u,v)\vee \anc(u,v)$\` for all $\{u,v\} \in E(G)$
\end{tabbing}

\sloppypar
This completes the construction of the formula $F(G,\omega)$, which
has at most $\bigO(n^2\omega)$ variables and at most $\bigO(n^3)$
clauses.
By construction, we obtain the following theorem.
\begin{theorem}\label{the:formula-tc}
  The formula $F(G,\omega)$ is satisfiable if and only if $G$ has a
  treedepth at most $\omega$. Moreover,
  a corresponding treedepth decomposition can be constructed from a satisfying
  assignment of $F(G,\omega)$ in linear time in terms of the number of
  variables of $F(G,\omega)$. 
\end{theorem}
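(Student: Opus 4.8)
The plan is to prove the equivalence by reading a rooted forest directly off the root and parent variables and showing that the encoding faithfully represents its ancestor relation; the ``moreover'' part then falls out of the construction. For the direction from satisfiability to treedepth, I would fix a satisfying assignment $\tau$ and define a directed graph $F$ on $V(G)$ in which $p$ is the parent of $c$ precisely when $\tau(\pt(p,c))=1$. Clause~(B) ensures that every vertex is either a root (has $\ro$ set) or has a parent, clause~(C) ensures that no vertex has two distinct parents, and clause~(D) together with the clause $\neg\pt(p,c)\vee\anc(p,c)$ ensures that a root has no parent. Hence every vertex is either a parentless root or has a unique parent, so $F$ defines a well-formed parent relation, and it remains to show it is acyclic.

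The crux of the argument is to prove that $F$ is acyclic and that $\tau(\anc(a,d))=1$ holds exactly when $a$ is a proper ancestor of $d$ in $F$. Acyclicity follows directly from the closure clauses: a simple directed cycle in the parent relation would, by repeated application of $\neg\pt(p,c)\vee\anc(p,c)$ and $\neg\pt(p,c)\vee\neg\anc(p',p)\vee\anc(p',c)$, force $\anc(u,v)$ and $\anc(v,u)$ for two distinct vertices $u,v$ on the cycle, contradicting the anti-symmetry clause $\neg\anc(u,v)\vee\neg\anc(v,u)$. Once $F$ is known to be a forest I would identify $\anc$ with the proper-ancestor relation by two inductions along the parent edges: the same two forward clauses give that every proper ancestor of $d$ yields a true $\anc(\cdot,d)$, while the backward clause $\neg\pt(p,c)\vee\neg\anc(p',c)\vee\anc(p',p)$ shows that any true $\anc(p',c)$ forces $p'$ to be the parent of $c$ or an ancestor of that parent, so that true $\anc$-pairs are genuine ancestor pairs; roots have no ancestors by clause~(D) and irreflexivity, which anchors the induction.

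With this identification, the remaining two constraints translate immediately. The edge clauses $\anc(u,v)\vee\anc(v,u)$ state that for every $uv\in E(G)$ one endpoint is an ancestor of the other, i.e.\ $G$ is a subgraph of $\undC(F)$; and the sequential counter bounding $|\SB a \SM \anc(a,l)\SE|$ by $\omega-1$ for every $l$ states that each vertex has at most $\omega-1$ proper ancestors, i.e.\ the height of $F$ is at most $\omega$. Thus $F$ witnesses $\td(G)\leq\omega$. For the converse, I would start from a treedepth decomposition $F$ of height at most $\omega$ and set $\tau$ in the evident way, taking $\ro$ on the roots of $F$, $\pt$ on its parent pairs, and $\anc$ on exactly its proper-ancestor pairs; a clause-by-clause verification, using that the proper-ancestor relation is transitive, irreflexive, and anti-symmetric and that $G\subseteq\undC(F)$ has height at most $\omega$, confirms that all clauses hold.

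I expect the main obstacle to be establishing the exact correspondence between the variable $\anc$ and the proper-ancestor relation of $F$ --- in particular the backward inclusion together with the acyclicity argument --- because the height bound and the edge-coverage clauses only carry the intended meaning once $\anc$ is known to coincide with the true ancestor relation and not merely to contain or be contained in it. Given this correspondence, the final ``moreover'' claim is routine: the forest $F$ is obtained in a single pass by reading the true $\ro$ and $\pt$ variables, which takes time linear in the number of variables of $F(G,\omega)$.
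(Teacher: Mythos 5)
Your proof is correct and takes the only natural route, which is exactly what the paper intends: the paper gives no explicit argument for this theorem (it is stated to follow ``by construction''), and your verification --- reading the forest off the $\ro$/$\pt$ variables, deriving acyclicity from the closure and antisymmetry clauses, and identifying $\anc$ with the proper-ancestor relation by induction from the roots using clause~(D) --- supplies precisely the details that ``by construction'' elides. I see no gaps; in particular you correctly isolate the one point that needs care, namely that the forward inclusion of the ancestor relation is what makes the cardinality counter bound the height while the backward inclusion is what makes the edge clauses certify $G\subseteq \undC(F)$.
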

}

\subsection{Preprocessing and Symmetry Breaking}

To increase the efficiency of our encoding, we implemented a
number of preprocessing procedures and symmetry breaking rules.
Our first symmetry breaking rule is based on the
next lemma.
\lv{\begin{lemma}}
  \sv{\begin{lemma}[$\star$]}
  \label{lem:td-twins}
  Let $G$ be a graph and let $u$ and $v$ be two adjacent vertices in
  $G$ such that $N_G(u)\setminus \{v\} \subseteq N_G(v) \setminus
  \{u\}$. Then there is an optimal treedepth decomposition $F$ of $G$ such
  that $v$ is an ancestor of $u$ in $F$.
\end{lemma}
\lv{\begin{proof}
  Because $u$ and $v$ are adjacent, it follows that either $u$ is an
  ancestor of $v$ or $v$ is an ancestor of $u$ in any treedepth
  decomposition of $G$. Hence let $F$ be a treedepth decomposition of
  $G$ such that $u$ is an ancestor of $v$. We claim that the forest
  $F'$ obtained from $F$ by switching $u$ and $v$ is still a treedepth
  decomposition of $G$. Indeed, $N_{\undC(F)}(v) \subseteq
  N_{\undC(F')}(v)$ and hence all edges incident to $v$ in $G$ are
  still covered by the closure of $F'$. Moreover, since $N_G(u) \setminus \{v\}
  \subseteq N_G(v) \setminus \{u\} \subseteq
  N_{\undC(F)}(v)\setminus \{u\}=N_{\undC(F')}(u)\setminus \{v\}$ it follows that all edges incident to $u$ in
  $G$ are still covered by the closure of $F'$.
  \qed
\end{proof}}
To employ the above lemma in our encoding, we iterate over all edges
of $G$ and whenever we find an edge $\{u,v\} \in E(G)$ such that
$N_G(u)\setminus \{v\}\subseteq N_G(v)\setminus \{u\}$, we add the
clause $\neg \comp(u,u,i) \vee \comp(v,v,i)$ for every
$i$ with $2 \leq i \leq \omega$.
\sv{
  We also introduce two preprocessing procedures, whose correctness is
  shown the lemma below. One allows us to remove certain vertices of
  degree one and the other allows us to remove apex
  vertices, i.e., vertices (whose) closed neighborhood is the whole vertex set.
%
  \begin{lemma}[$\star$]\label{lem:td-pre-comb}
      Let $G$ be a graph. If $v$ is a vertex of $G$ incident to two
      vertices $l$ and $l'$ of degree one, then $\td(G)=\td(G-\{l'\})$. Moreover, if $v$ is an apex vertex of $G$, then
      $\td(G)=\td(G-\{v\})+1$.
  \end{lemma}
}

The following lemma allows us to remove certain vertices of degree $1$ from the graph.
  \begin{lemma}
    \label{lem:td-leaves}
  Let $G$ be a graph and let $v$ be a vertex of $G$ incident to two
  vertices $l$ and $l'$ of degree one. Then $\td(G)=\td(G-\{l'\})$.
\end{lemma}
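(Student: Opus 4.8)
The plan is to show $\td(G)=\td(G-\{l'\})$ by proving inequalities in both directions. The easy direction is $\td(G-\{l'\})\leq \td(G)$, which follows immediately since $G-\{l'\}$ is a subgraph of $G$ and treedepth is monotone under taking subgraphs (a treedepth decomposition $F$ of $G$ restricts to a treedepth decomposition of any subgraph by deleting the vertices not present and contracting out the resulting gaps in the forest, without increasing the height). Thus the whole content of the lemma lies in the reverse inequality $\td(G)\leq \td(G-\{l'\})$: from any optimal treedepth decomposition of the smaller graph we must reinsert $l'$ without increasing the height.

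First I would take an optimal treedepth decomposition $F$ of $G-\{l'\}$, so $\mheight(F)=\td(G-\{l'\})=:k$. The key observation is that $l$ and $l'$ are \emph{false twins with respect to $v$}: both have degree one with the single neighbour $v$, so in $G$ they have exactly the same (closed-minus-each-other) adjacency structure, namely $N_G(l)=N_G(l')=\{v\}$. Since $F$ is a valid treedepth decomposition of $G-\{l'\}$, the only edge incident to $l$ that must be covered is $\{l,v\}$, so $l$ and $v$ are in an ancestor--descendant relationship in $F$. I would then plan to insert $l'$ into $F$ at exactly the same structural position as $l$: specifically, attach $l'$ as a new child of $v$ (equivalently, place $l'$ as a sibling-leaf mirroring $l$ beneath $v$), yielding a forest $F'$ on vertex set $V(G)$.

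The next step is to verify that $F'$ is a treedepth decomposition of $G$ of height at most $k$. For correctness of the closure, the only new edge to cover is $\{l',v\}$; since $l'$ is placed as a descendant of $v$ in $F'$, the pair $\{l',v\}$ lies in an ancestor--descendant relation and hence is an edge of $\undC(F')$, while all edges of $G-\{l'\}$ remain covered because $F'$ restricted to $V(G)\setminus\{l'\}$ is exactly $F$. For the height bound, attaching $l'$ directly as a child of $v$ gives $\mheight_{F'}(l')=\mheight_F(v)+1$. Here I would invoke the hypothesis that $v$ is \emph{also} incident to the degree-one vertex $l$: since $l$ already sits as a descendant of $v$ in $F$ at depth $\mheight_F(v)+1$ or deeper, the subtree below $v$ already reaches depth at least $\mheight_F(v)+1$, so adding $l'$ as another child of $v$ creates no vertex deeper than the already-present $l$. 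Concretely $\mheight_{F'}(l')=\mheight_F(v)+1\leq \mheight_F(l)\leq k$, and every other vertex retains its depth, so $\mheight(F')\leq k=\td(G-\{l'\})$.

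Combining the two directions yields $\td(G)=\td(G-\{l'\})$. The main obstacle I anticipate is the height argument: one must use the presence of the \emph{second} degree-one vertex $l$ to absorb $l'$ without deepening the forest — this is precisely why the lemma requires two pendant neighbours of $v$ rather than one, since a single pendant vertex could otherwise be the unique deepest leaf and duplicating it might force an increase. I would therefore make sure to argue carefully that $l$ is not shallower than the level at which $l'$ is inserted, perhaps by first normalising $F$ so that $l$ is placed as a direct child of $v$ (which is always possible for a degree-one vertex whose only neighbour is $v$, since $l$ need only be a descendant of $v$ and can be pushed up to be an immediate child without violating any edge-coverage or increasing height). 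With that normalisation, $l$ and $l'$ sit at the identical depth $\mheight_F(v)+1$ and the height bound is immediate.
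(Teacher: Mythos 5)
Your proof is correct and follows essentially the same route as the paper: take an optimal decomposition of $G-\{l'\}$, arrange for $l$ to be a descendant of $v$, and attach $l'$ as a child of $v$, using $l$'s depth to absorb the new leaf without increasing the height. The only difference is that the paper obtains the normalisation ``$v$ is an ancestor of $l$'' by citing its earlier symmetry-breaking Lemma~\ref{lem:td-twins}, whereas you argue it directly by splicing $l$ out and re-attaching it as an immediate child of $v$; both are fine.
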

\begin{proof}
  Because $G \setminus \{l'\}$ is a subgraph of $G$, we obtain that
  $\td(G)\geq \td(G- \{l'\})$. Towards showing that
  $\td(G)\leq \td(G- \{l'\})$, let $F$ be an optimal
  treedepth decomposition of $G \setminus \{l'\}$. Because of
  Lemma~\ref{lem:td-twins}, we can assume that $v$ is an
  ancestor of $l$ in $F$. Consequently, the forest $F'$ obtained from
  $F$ after adding $l'$ as a leaf to $v$ has the same depth as $F$ and
  is a treedepth decomposition of $G$.
  \qed
\end{proof}

Our final lemma allows us to remove all apex
  vertices.
\begin{lemma}
    \label{lem:td-apex}
  Let $G$ be a graph and let $a$ be an apex vertex of $G$. Then
  $\td(G)=1+\td(G \setminus \{a\})$.
\end{lemma}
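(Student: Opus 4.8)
The plan is to prove both inequalities $\td(G)\leq 1+\td(G\setminus\{a\})$ and $\td(G)\geq 1+\td(G\setminus\{a\})$, matching the structure of the proof of Lemma~\ref{lem:td-leaves}. For the upper bound, I would take an optimal treedepth decomposition $F'$ of $G\setminus\{a\}$ and build a decomposition $F$ of $G$ by placing $a$ as a new common root above $F'$; that is, make $a$ the root and attach the root(s) of $F'$ as children of $a$. Since $a$ is an apex vertex, it is adjacent in $G$ to every other vertex, and in the closure $\clos(F)$ the root $a$ is an ancestor of all vertices, so all edges incident to $a$ are covered; the remaining edges are covered because $F'$ already covers the edges of $G\setminus\{a\}$. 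The height increases by exactly one, giving $\td(G)\leq 1+\td(G\setminus\{a\})$.

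For the lower bound, I would start from an optimal treedepth decomposition $F$ of $G$ and argue that $a$ may be assumed to be the root. The key observation is that $a$, being adjacent to every other vertex, must lie on a common root-to-leaf path with each vertex $v$ (since adjacency in $G$ forces an ancestor/descendant relation in any treedepth decomposition, as used in Lemma~\ref{lem:td-twins}); hence $a$ is comparable to every vertex of its component, and since $G$ is connected (or working component-wise) this forces $a$ to be an ancestor of every other vertex, i.e., $a$ is the unique root. Deleting $a$ from $F$ then yields a forest $F'$ on $V(G)\setminus\{a\}$ whose closure still covers all edges of $G\setminus\{a\}$, and whose height is exactly one less than that of $F$. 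Therefore $\td(G\setminus\{a\})\leq \mheight(F)-1 = \td(G)-1$.

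The main obstacle is the lower-bound direction, specifically justifying that $a$ can be taken to be the root rather than merely a vertex comparable to all others. If $G$ is connected, pairwise comparability with $a$ together with the tree structure forces $a$ to be a common ancestor, hence the root of its tree; but one should handle the case where $G\setminus\{a\}$ is disconnected, in which $a$'s removal splits the single tree rooted at $a$ into several subtrees forming the forest $F'$. I would also note the edge case where $G$ has only the vertex $a$, which is trivial. Everything else is routine verification that the closure relation is preserved under adding or removing a universal root.

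\begin{proof}
Because $G\setminus\{a\}$ is a subgraph of $G$, any treedepth decomposition of $G$ restricts to one of $G\setminus\{a\}$, so it suffices to show $\td(G)\leq 1+\td(G\setminus\{a\})$ and $\td(G)\geq 1+\td(G\setminus\{a\})$.

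For the first inequality, let $F'$ be an optimal treedepth decomposition of $G\setminus\{a\}$ and let $F$ be obtained by adding $a$ as a new root whose children are the roots of $F'$. Then $\mheight(F)=\mheight(F')+1$. Since $a$ is an apex vertex it is adjacent to every other vertex of $G$, and in $\clos(F)$ the root $a$ is an ancestor of all vertices; thus all edges incident to $a$ are covered by $\clos(F)$. The remaining edges of $G$ lie in $G\setminus\{a\}$ and are covered because $G\setminus\{a\}$ is a subgraph of $\clos(F')\subseteq \clos(F)$. Hence $F$ is a treedepth decomposition of $G$ and $\td(G)\leq \mheight(F)=1+\td(G\setminus\{a\})$.

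For the second inequality, let $F$ be an optimal treedepth decomposition of $G$. Since $a$ is adjacent to every other vertex, for each $v\neq a$ either $a$ is an ancestor of $v$ or $v$ is an ancestor of $a$ in $F$; in particular $a$ is comparable to every other vertex, which forces $a$ to be the root of the tree of $F$ containing it, and this tree contains all of $V(G)$. Deleting $a$ from $F$ yields a forest $F'$ on $V(G)\setminus\{a\}$ with $\mheight(F')=\mheight(F)-1$, and since removing the common ancestor $a$ does not affect the ancestor relation among the remaining vertices, $G\setminus\{a\}$ is a subgraph of $\clos(F')$. Thus $\td(G\setminus\{a\})\leq \mheight(F')=\td(G)-1$, completing the proof.
\qed
\end{proof}
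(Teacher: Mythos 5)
Your upper-bound direction ($\td(G)\leq 1+\td(G\setminus\{a\})$) matches the paper's argument exactly. The lower-bound direction, however, contains a genuine gap: you assert that because $a$ is comparable (ancestor or descendant) to every other vertex in the optimal decomposition $F$, this ``forces $a$ to be the root of the tree of $F$ containing it.'' That inference is false. Comparability with every vertex only places $a$ on a spine --- all of its non-descendants must be its ancestors --- but $a$ may well have proper ancestors. Concretely, take $G=K_3$ on $\{a,b,c\}$: the decomposition whose tree is the path $b,a,c$ rooted at $b$ is optimal (height $3=\td(K_3)$), yet the apex $a$ is an interior vertex, not the root. For such an $F$ you cannot simply delete $a$ and conclude that the height drops by one, so the proof as written does not go through.

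The paper closes exactly this gap by invoking Lemma~\ref{lem:td-twins}: since $a$ is an apex, $N_G(u)\setminus\{a\}\subseteq N_G(a)\setminus\{u\}$ for every other vertex $u$, so the swap argument of that lemma lets one assume $a$ is a root of an optimal $F$ (and adjacency to everything then makes it the unique root), after which deleting it is legitimate. Alternatively, your own setup can be repaired without re-rooting: since every vertex is either an ancestor or a descendant of $a$, contracting $a$ out of the tree (attaching its children to its parent) preserves all comparabilities among the remaining vertices, and because the deepest vertex of $F$ must be $a$ or one of its descendants, every depth realizing the height decreases by one, giving $\td(G\setminus\{a\})\leq \mheight(F)-1$. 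Either repair is short, but one of them is needed; as written the argument rests on a false structural claim.
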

\begin{proof}
  Towards showing that $\td(G) \leq 1+\td(G \setminus \{a\})$, let $F$
  be an optimal treedepth decomposition of $G \setminus \{a\}$. Then,
  $F'$, which is obtained from $F$ by simply adding $a$, making it
  adjacent to all roots of $F$, and setting it to be the new root of
  $F'$, is a treedepth decomposition of $G$ of width at most
  $1+\td(G-\{a\})$, as required.

  Towards showing that
  $\td(G) \geq 1+\td(G- \{a\})$, let $F$ be an optimal
  treedepth decomposition of $G$. By
  Lemma~\ref{lem:td-twins}, we can assume that $a$ is a root of
  $F$. Moreover, because $a$ is adjacent to every vertex in $G$, it
  follows that $a$ must be the only root of $F$. Hence $F'$ obtained
  from $F \setminus \{a\}$ after making all children of $a$ in $F$ to
  roots is a treedepth decomposition of $G- \{a\}$ of depth at
  most $\td(G)-1$, as required.
\end{proof}

\section{Experiments}\label{sec:experiments}

We implemented the SAT-encoding for treecut width and the two SAT-encodings for treedepth and evaluated them on various benchmark
instances; for comparison we also computed the pathwidth and
treewidth of all graphs using the currently best performing
SAT-encodings~\cite{SamerVeith09,LodhaOrdyniakSzeider16}; note
that~\cite{SamerVeith09} is still the best-known SAT-encoding for
treewidth, since the performance gains of later algorithms~\cite{BergJarvisalo14,BannachBerndtEhlers17} are
almost entirely due to preprocessing, whereas the employed SAT-encoding is virtually identical.
Our benchmark instances include $39$ famous named graphs
from the literature~\cite{MathWorld}, various standard
graphs such as complete graphs ($K_n$), complete bipartite graphs
($K_{n,n}$), paths ($P_n$), cycles ($C_n$), complete binary trees
($B_n$), and grids ($G_{n,n}$) as well as random
graphs. To test the correctness of our
encodings, we compared the obtained values to known values (the treedepth of standard graphs is known or
easy to compute; in the case of treecut width, the values for complete
and complete bipartite graphs are given in
\lv{Lemmas~\ref{lem:tcw-comp} and~\ref{lem:tcw-comp-bip}}\sv{Lemma~\ref{lem:tcw-comp-sv}}). We also compared the
obtained values to related parameters such as maximum degree, pathwidth, and treewidth
(using Lemmas~\ref{lem:tcw-rel} and~\ref{pro:tdfacts}) and
verified that the decompositions obtained from the encodings are well-formed.


Throughout we used the SAT-solver Glucose 4.0 (with standard parameter settings) 
as it performed best in our initial tests. We ran the experiments on a 
4-core Intel Xeon CPU E5649, 2.35GHz, 72 GB RAM machine with Ubuntu 14.04 with 
each process having access to at most 8 GB RAM.
Our implementation,
 which was done in python~2.7.3 and networkx~1.11, is available via 
 \texttt{GitHub}~\footnote{\url{
 https://github.com/nehal73/TCW_TD_to_SAT}}. 

\subsection{Results and Discussion}\label{ssec:results}

\begin{table}[tbh]
\centering{
\caption{Experimental results for standard graphs.
A ``$P$'' indicates that the
instance is solved by preprocessing.
}\label{tab:constructed}
\vspace{5pt}

\begin{tabular}{@{}l@{~~~}
r@{~~}r@{~~~~~}r@{~~~}r@{}}
\toprule
\multirow{2}{*}{\emph{graph class}} & \multicolumn{2}{c}{\emph{treecut width}} & 
\multicolumn{2}{c}{\emph{treedepth}} \\ \cmidrule(r){2-3}\cmidrule(r){4-5} 
 & {$|V|$} & {$|E|$} & 
                       {$|V|$} & {$|E|$} \\ \midrule
  paths ($P_n$) & $P$ & $P$ & 255 & 254 \\
  cycles ($C_n$) &$P$  & $P$ & 255 & 255 \\
  complete binary trees ($B_n$) & $P$ & $P$ & 255 & 254 \\
  $n\times n$ grids ($G_{n,n}$) & 49 & 84 & 36 & 60 \\
  complete bip. graphs ($K_{n,n}$) & 30 & 225 & 22 & 121 \\
  complete graphs ($K_n$) & 30 & 435 & $P$ & $P$ \\ \bottomrule
\end{tabular}
}
\vspace{-0.1cm}
\end{table}
\begin{table}[tbh]
\centering\caption{
  Percentage of random graphs solved within the timeout for
  treecut width and treedepth  for  combinations
  of $|V|$ (represented by the rows) and $p$ (edge probability; 
  represented by the columns).}

\vspace{5pt}

\label{tab:exp-rand}
\begin{tabular}{@{}l@{~~~~}r@{~~~}r@{~~~}r@{~~~}r@{~~~}r@{~~~}r@{~~~}
r@ {~~~} r@{~~~}r@{}}
\toprule
\multicolumn{10}{c}{\emph{treecut width}}\\
\midrule
  $|V|$ & 0.1 & 0.2 & 0.3 & 0.4 & 0.5 & 0.6 & 0.7 & 0.8 & 0.9 \\ \midrule
  20 & 100 & 100 & 100 & 100 & 100 & 100 & 100 & 100 & 100 \\
  25 & 100 & 100 & 75  & 90  & 100 & 100 & 100 & 100 & 100 \\
  30 & 100 & 25  & 10  & 55  & 85  & 100 & 100 & 100 & 100 \\
  40 & 50  & 0   & 0   & 0   & 0   & 0   & 0   & 0   & 0   \\
  50 & 0   & 0   & 0   & 0   & 0   & 0   & 0   & 0   & 0   \\
  \midrule \\
\toprule
\multicolumn{10}{c}{\emph{treedepth} (partition-based encoding)}\\
\midrule
  $|V|$ & 0.1 & 0.2 & 0.3 & 0.4 & 0.5 & 0.6 & 0.7 & 0.8 & 0.9 \\ \midrule
  10 & 100 & 100 & 100 & 100 & 100 & 100 & 100 & 100 & 100 \\
  15 & 100 & 100 & 100 & 100 & 100 & 100 & 100 & 100 & 100 \\
  20 & 100 & 100 & 100 & 45  & 10  & 0   & 0   & 0   & 15  \\
  30 & 100 & 0   & 0   & 0   & 0   & 0   & 0   & 0   & 0   \\
  40 & 10  & 0   & 0   & 0   & 0   & 0   & 0   & 0   & 0   \\
  50 & 0   & 0   & 0   & 0   & 0   & 0   & 0   & 0   & 0 \\
  \midrule \\
\toprule
\multicolumn{10}{c}{\emph{treedepth} (tree-structure based encoding)}\\
\midrule
  $|V|$ & 0.1 & 0.2 & 0.3 & 0.4 & 0.5 & 0.6 & 0.7 & 0.8 & 0.9 \\ \midrule
  10 & 100 & 100 & 100 & 100 & 100 & 100 & 100 & 100 & 100 \\
  15 & 100 & 85  & 35  & 10  & 0   & 5   & 10  & 50  & 100 \\
  20 & 75  & 5   & 0   & 0   & 0   & 0   & 0   & 0   & 30  \\
  25 & 25  & 0   & 0   & 0   & 0   & 0   & 0   & 0   & 0   \\
  30 & 0   & 0   & 0   & 0   & 0   & 0   & 0   & 0   & 0   \\ \bottomrule
\end{tabular}
\end{table}

\begin{table*}[p]
\centering{
  \caption{Experimental results for the famous graphs.
  ``cpu'' denotes the overall CPU time in
seconds including preprocessing and verification of the computed decomposition. 
An asterisk ($*$) in the cpu column indicates that the given instance could not be solved within the timeout; in this case the width
column gives the lower bound and upper bound obtained within the
timeout. }\label{tab:famous}

\vspace{10pt}
\begin{tabular}{@{}l@{~~~~~~~~~}r@{~~~~~~~}r@{~~~~~}r@{~~~~~}r@{~~~~~}r@{
~~~~~~~~ } r@ { ~~~~~ } r@ {~~~~~~~~}r@{~~~~~}r@{}}
\toprule
\multirow{2}{*}{Instance} & \multirow{2}{*}{$|V|$} & \multirow{2}{*}{$|E|$} & 
\multirow{2}{*}{$\Delta$} & \multicolumn{2}{c}{treecut width} 
& \multicolumn{2}{c}{treedepth} & \multirow{2}{*}{$\pw$} & 
\multirow{2}{*}{$\tw$} \\ \cmidrule(lr){5-6}\cmidrule(r){7-8}
 &  &  &  & \multicolumn{1}{c}{width}  & \multicolumn{1}{c}{cpu} & 
\multicolumn{1}{c}{width} & \multicolumn{1}{c}{cpu} &  &  \\ \midrule
Diamond & 4 & 5 & 3 & 2 & 0.00 & 3 & 0.15 & 2 & 2 \\
Bull & 5 & 5 & 3 & 2 & 0.00 & 3 & 0.07 & 2 & 2 \\
Butterfly & 5 & 6 & 4 & 2 & 0.00 & 3 & 0.06 & 2 & 2 \\
Prism & 6 & 9 & 3 & 4 & 0.13 & 5 & 0.10 & 3 & 3 \\
Moser & 7 & 11 & 4 & 4 & 0.12 & 5 & 0.15 & 3 & 3 \\
Wagner & 8 & 12 & 3 & 4 & 0.21 & 6 & 0.26 & 4 & 4 \\
Pmin & 9 & 12 & 3 & 4 & 0.13 & 5 & 0.14 & 4 & 3 \\
Petersen & 10 & 15 & 3 & 5 & 0.71 & 6 & 0.34 & 5 & 4 \\
Herschel & 11 & 18 & 4 & 5 & 0.86 & 5 & 0.13 & 4 & 3 \\
Gr\"{o}tzsch & 11 & 20 & 5 & 6 & 1.07 & 7 & 0.29 & 5 & 5 \\
Goldner & 11 & 27 & 8 & 7 & 2.12 & 5 & 0.25 & 4 & 3 \\
D\"{u}rer & 12 & 18 & 3 & 4 & 0.85 & 7 & 0.37 & 4 & 4 \\
Franklin & 12 & 18 & 3 & 4 & 0.71 & 7 & 0.34 & 5 & 4 \\
Frucht & 12 & 18 & 3 & 4 & 0.83 & 6 & 0.23 & 4 & 3 \\
Tietze & 12 & 18 & 3 & 5 & 1.20 & 7 & 0.39 & 5 & 4 \\
Chv\'{a}tal & 12 & 24 & 4 & 6 & 1.85 & 8 & 0.68 & 6 & 6 \\
Paley13 & 13 & 39 & 6 & 10 & 6.31 & 10 & 4.60 & 8 & 8 \\
Poussin & 15 & 39 & 6 & 9 & 22.36 & 9 & 2.64 & 6 & 6 \\
Sousselier & 16 & 27 & 5 & 6 & 6.31 & 8 & 1.20 & 5 & 5 \\
Hoffman & 16 & 32 & 8 & 4 & 8.83 & 8 & 1.74 & 7 & 6 \\
Clebsch & 16 & 40 & 5 & 8 & 7.68 & 10 & 18.41 & 9 & 8 \\
Shrikhande & 16 & 48 & 6 & 10 & 18.86 & 11 & 49.77 & 9 & 7-10 \\
Errera & 17 & 45 & 6 & 9 & 17.84 & 10 & 19.93 & 6 & 6 \\
Paley17 & 17 & 68 & 6 & 14 & 51.20 & 14 & 7569.02 & 12 & 11 \\
Pappus & 18 & 27 & 3 & 6 & 35.26 & 8 & 1.92 & 7 & 5-7 \\
Robertson & 19 & 38 & 4 & 8 & 42.64 & 10 & 63.01 & 8 & 7-9 \\
Desargues & 20 & 30 & 3 & 6 & 56.18 & 9 & 12.36 & 7 & 5-7 \\
Dodecahedron & 20 & 30 & 3 & 6 & 87.05 & 9 & 10.84 & 6 & 5-7 \\
FlowerSnark & 20 & 30 & 3 & 6 & 76.37 & 9 & 17.45 & 7 & 5-7 \\
Folkman & 20 & 40 & 4 & 8 & 78.64 & 9 & 11.77 & 7 & 6 \\
Brinkmann & 21 & 42 & 4 & 8 & 75.38 & 11 & 838.41 & 8 & 7-10 \\
Kittell & 23 & 63 & 7 & 10 & 65.11 & 12 & 2422.53 & 7 & 7 \\
McGee & 24 & 36 & 3 & 6 & 71.78 & 11 & 2825.19 & 8 & 5-8 \\
Nauru & 24 & 36 & 3 & 6 & 52.68 & 10 & 158.20 & 8 & 5-8 \\
Holt & 27 & 54 & 4 & [7-9] & * & [11-13] & * & 9 & 7-10 \\
Watsin & 50 & 75 & 3 & 5 & 202.49 & [10-13] & * & 7 & 5-8 \\
B10Cage & 70 & 105 & 3 & [5-11] & * & [10-23] & * & [11-16] & 5-17 \\
Ellingham & 78 & 117 & 3 & 6 & 15002.47 & [10-14] & * & 6 & 5-7 \\ \bottomrule
\end{tabular}}
\end{table*}
 
Our experimental results for the standard, random, and famous graphs
are shown in Tables~\ref{tab:constructed},~\ref{tab:exp-rand}, and~\ref{tab:famous},
respectively. Throughout we use $|V|$, $|E|$, $\Delta$, $\pw$, $\tw$
to denote the number of vertices, the number of edges, the maximum
degree, the pathwidth, and the treewidth of the input graph,
respectively. 
We employed a timeout per SAT call of $2000$ seconds and an
overall timeout of $6$ hours for our experiments with the famous and random
graphs. Moreover, we used $900$ seconds per SAT call and an overall
timeout of $3$ hours for the standard graphs.

As can be seen in Table~\ref{tab:famous}, we were able to compute the
exact treecut width and treedepth for almost all of the famous graphs;
specifically $37$ out of $39$ instances for treecut width and $35$ out of $39$
instances for treedepth. For the remaining two respectively four
instances, we were able to obtain relatively tight lower and
upper bounds. Even though we are aware that encodings for different width measures
are not directly comparable, it is interesting to note that our encodings outperform the currently best
performing SAT-encoding for treewidth~\cite{SamerVeith09}, which solves only $26$ out of
$39$ instances, and are in line with the currently
best performing SAT-encoding for
pathwidth~\cite{LodhaOrdyniakSzeider17}, solving $37$ out of $39$
instances.
It is also worth mentioning that the surprisingly strong performance
of our encodings is not due to preprocessing; indeed, none of the
preprocessing or symmetry-breaking rules for treecut width nor
treedepth were applicable for the famous graphs.
Finally, we would like to mention that our second encoding for treedepth could only solve $17$ out of $39$ instances. This further
underlines the strength of partition-based encodings for
computing decomposition-based parameters.

Table~\ref{tab:constructed} shows the scalability of our encodings for
the standard graphs. Namely, for each of the standard graphs and both
of our encodings, the table gives the maximum number of vertices and
edges for which the encoding was able to compute the exact width
within the timeout. Note that not all of the standard graphs are
interesting for both treecut width and treedepth (indicated by a ``$P$''
in the table). This is because some of the graphs can be solved entirely by
preprocessing; for instance the treedepth of complete graphs can be
computed using Lemma~\lv{\ref{lem:td-apex}}\sv{\ref{lem:td-pre-comb}}. Moreover, the treecut
width of paths, cycles, and trees can be computed using
Lemma~\ref{lem:tcw-3ec}. As one can see, our treecut width encoding is
able to solve $K_n$, $K_{n,n}$, and $G_{n,n}$ up to $n=30$, $n=15$,
and $n=7$, respectively. Similarly, our treedepth encoding is able to
solve $B_n$, $K_{n,n}$, $C_n$, $G_{n,n}$, and $P_n$ up to $n=8$,
$n=11$, $n=255$, $n=6$, and $n=255$, respectively. Given the
simplicity of the treedepth encoding it is surprising that it
performs slightly worse than the treecut width encoding on complete
bipartite graphs and grids. However, its extraordinary performance on
paths, complete binary trees and cycles seems to indicate that the
encoding is well suited for sparse graphs.
 
Finally, Table~\ref{tab:exp-rand} provides the scalability of our
three encodings for uniformally generated random graphs for varying edge
densities and number of vertices. In line with our previous
observations the encoding for treecut width scales significantly
better than our two encodings for treedepth (solving almost all random
graphs upto $30$ instead of $15$ vertices) and both encodings show a
slight preference for very sparse graphs. For the case of treedepth,
the results once again show a significant advantage for our partition-based
encoding over the tree-structure based encoding.

\section{Conclusion and Future Work}

We implemented the first practical algorithms for computing the
algorithmically important parameters treecut width and treedepth. Our
experimental results show that due to our novel partition-based characterisations for the considered width parameters, our algorithms perform very well on
small to medium sized graphs. In particular, our algorithms perform
better than the current best SAT-encoding for treewidth, which even
though not directly comparable serves as a good reference point. We would also like to point out that our algorithms will be
very helpful in the future to evaluate the accuracy of heuristics for
the considered decomposition parameters and can be scaled to larger
graphs if the aim is just to compute lower bounds and upper bounds for
the parameters. We see our algorithms as
a first step towards turning the yet mostly theoretical applications of both
parameters into practice.

Extending the scalability of our
algorithms to even larger graphs can be seen as the main challenge
for future work. Here, SAT-based local improvement approaches 
such as those that have recently been developed for branchwidth and treewidth~\cite{LodhaOrdyniakSzeider16,FichteLodhaSzeider17},
provide an interesting venue for future work.
In fact, the work on local improvement for treewidth~\cite{FichteLodhaSzeider17} showed that, compared to other exact methods, SAT-encodings are particularly suitable for this approach, hence it can be expected that our SAT-encodings for treedepth and treecut width will serve well in a local improvement approach. 
Other promising directions include the development of
more efficient preprocessing procedures, or
splitting the
graph into smaller parts by using, e.g.,
balanced cuts or separators.


\smallskip
\noindent \textbf{Errata and acknowledgments.} The short version of this article which appeared at ALENEX 2019 contained $2$ erroneous treedepth values in Table~\ref{tab:famous}. This was caused by an incorrect transition from the preprocessing to the solver: in particular, the solver requires the graph to be connected, and hence it is necessary to provide it with the individual connected components that arise from preprocessing. The issue is fixed in the presented version.

We thank James Trimble (of the School of Computing Science at the University of Glasgow) for spotting this issue. We also thank Vaidyanathan P. R. (of the Algorithms and Complexity Group at TU Wien) for his help with resolving the transition issue described above.

\bibliographystyle{plain}
\bibliography{literature}

\begin{thebibliography}{10}

\bibitem{BannachBerndtEhlers17}
Max Bannach, Sebastian Berndt, and Thorsten Ehlers.
\newblock Jdrasil: {A} modular library for computing tree decompositions.
\newblock In Costas~S. Iliopoulos, Solon~P. Pissis, Simon~J. Puglisi, and
  Rajeev Raman, editors, {\em 16th International Symposium on Experimental
  Algorithms, {SEA} 2017, June 21-23, 2017, London, {UK}}, volume~75 of {\em
  LIPIcs}, pages 28:1--28:21. Schloss Dagstuhl - Leibniz-Zentrum fuer
  Informatik, 2017.

\bibitem{BergJarvisalo14}
Jeremias Berg and Matti J{\"{a}}rvisalo.
\newblock {SAT}-based approaches to treewidth computation: An evaluation.
\newblock In {\em 26th {IEEE} International Conference on Tools with Artificial
  Intelligence, {ICTAI} 2014, Limassol, Cyprus, November 10-12, 2014}, pages
  328--335. {IEEE} Computer Society, 2014.

\bibitem{Biere09}
Armin Biere.
\newblock Bounded model checking.
\newblock In Armin Biere, Marijn Heule, Hans van Maaren, and Toby Walsh,
  editors, {\em Handbook of Satisfiability}, volume 185 of {\em Frontiers in
  Artificial Intelligence and Applications}, pages 457--481. {IOS} Press, 2009.

\bibitem{Bjork09}
Magnus Bj\"{o}rk.
\newblock Successful {SAT} encoding techniques.
\newblock {\em J. on Satisfiability, Boolean Modeling, and Computation},
  7:189--201, 2009.

\bibitem{CyganFKLMPPS14}
Marek Cygan, Fedor~V. Fomin, Lukasz Kowalik, Daniel Lokshtanov, Daniel Marx,
  Marcin Pilipczuk, Michal Pilipczuk, and Saket Saurabh.
\newblock {\em Parameterized Algorithms}.
\newblock Texts in Computer Science. Springer, 2013.

\bibitem{Diestel10}
Reinhard Diestel.
\newblock {\em Graph Theory}.
\newblock Springer-Verlag, Heidelberg, 4th edition, 2010.

\bibitem{DowneyFellows13}
Rodney~G. Downey and Michael~R. Fellows.
\newblock {\em Fundamentals of Parameterized Complexity}.
\newblock Texts in Computer Science. Springer, 2013.

\bibitem{ElberfeldGT16}
Michael Elberfeld, Martin Grohe, and Till Tantau.
\newblock Where first-order and monadic second-order logic coincide.
\newblock {\em {ACM} Trans. Comput. Log.}, 17(4):25:1--25:18, 2016.

\bibitem{FichteLodhaSzeider17}
Johannes~Klaus Fichte, Neha Lodha, and Stefan Szeider.
\newblock {SAT}-based local improvement for finding tree decompositions of
  small width.
\newblock In Serge Gaspers and Toby Walsh, editors, {\em Theory and
  Applications of Satisfiability Testing - {SAT} 2017 - 20th International
  Conference, Melbourne, VIC, Australia, August 28 - September 1, 2017,
  Proceedings}, volume 10491 of {\em Lecture Notes in Computer Science}, pages
  401--411. Springer, 2017.

\bibitem{FlumGrohe06}
J\"{o}rg Flum and Martin Grohe.
\newblock {\em Parameterized Complexity Theory}, volume XIV of {\em Texts in
  Theoretical Computer Science. An EATCS Series}.
\newblock Springer Verlag, Berlin, 2006.

\bibitem{FominGP15}
Fedor~V. Fomin, Archontia~C. Giannopoulou, and Michal Pilipczuk.
\newblock Computing tree-depth faster than 2\({}^{\mbox{n}}\).
\newblock {\em Algorithmica}, 73(1):202--216, 2015.

\bibitem{GajarskyHlineny12}
Jakub Gajarsk{\'{y}} and Petr Hlinen{\'{y}}.
\newblock Faster deciding {MSO} properties of trees of fixed height, and some
  consequences.
\newblock In Deepak D'Souza, Telikepalli Kavitha, and Jaikumar Radhakrishnan,
  editors, {\em {IARCS} Annual Conference on Foundations of Software Technology
  and Theoretical Computer Science, {FSTTCS} 2012, December 15-17, 2012,
  Hyderabad, India}, volume~18 of {\em LIPIcs}, pages 112--123. Schloss
  Dagstuhl - Leibniz-Zentrum fuer Informatik, 2012.

\bibitem{GajHlin15}
Jakub Gajarsk{\'{y}} and Petr Hlinen{\'{y}}.
\newblock Kernelizing {MSO} properties of trees of fixed height, and some
  consequences.
\newblock {\em Logical Methods in Computer Science}, 11(1), 2015.

\bibitem{GanianKimSzeider15}
Robert Ganian, Eun~Jung Kim, and Stefan Szeider.
\newblock Algorithmic applications of tree-cut width.
\newblock In Giuseppe~F. Italiano, Giovanni Pighizzini, and Donald Sannella,
  editors, {\em Proc. {MFCS} 2015}, volume 9235 of {\em LNCS}, pages 348--360.
  Springer, 2015.

\bibitem{GanianKluteOrdyniak18}
Robert Ganian, Fabian Klute, and Sebastian Ordyniak.
\newblock On structural parameterizations of the bounded-degree vertex deletion
  problem.
\newblock In Rolf Niedermeier and Brigitte Vall{\'{e}}e, editors, {\em 35th
  Symposium on Theoretical Aspects of Computer Science, {STACS} 2018, February
  28 to March 3, 2018, Caen, France}, volume~96 of {\em LIPIcs}, pages
  33:1--33:14. Schloss Dagstuhl - Leibniz-Zentrum fuer Informatik, 2018.

\bibitem{GanianOrdyniak18}
Robert Ganian and Sebastian Ordyniak.
\newblock The complexity landscape of decompositional parameters for {ILP}.
\newblock {\em Artif. Intell.}, 257:61--71, 2018.

\bibitem{GutinJonesWahlstrom15}
Gregory~Z. Gutin, Mark Jones, and Magnus Wahlstr{\"{o}}m.
\newblock Structural parameterizations of the mixed chinese postman problem.
\newblock In Nikhil Bansal and Irene Finocchi, editors, {\em Algorithms - {ESA}
  2015 - 23rd Annual European Symposium, Patras, Greece, September 14-16, 2015,
  Proceedings}, volume 9294 of {\em Lecture Notes in Computer Science}, pages
  668--679. Springer, 2015.

\bibitem{HeuleSzeider15}
Marijn Heule and Stefan Szeider.
\newblock A {SAT} approach to clique-width.
\newblock {\em ACM Trans. Comput. Log.}, 16(3):24, 2015.

\bibitem{HeuleKullmann17}
Marijn J.~H. Heule and Oliver Kullmann.
\newblock The science of brute force.
\newblock {\em Commun. {ACM}}, 60(8):70--79, 2017.

\bibitem{HeuleKullmannMarek16}
Marijn J.~H. Heule, Oliver Kullmann, and Victor~W. Marek.
\newblock Solving and verifying the boolean pythagorean triples problem via
  cube-and-conquer.
\newblock In Nadia Creignou and Daniel~Le Berre, editors, {\em Theory and
  Applications of Satisfiability Testing - {SAT} 2016 - 19th International
  Conference, Bordeaux, France, July 5-8, 2016, Proceedings}, volume 9710 of
  {\em Lecture Notes in Computer Science}, pages 228--245. Springer Verlag,
  2016.

\bibitem{IwataOgasawaraOhsaka18}
Yoichi Iwata, Tomoaki Ogasawara, and Naoto Ohsaka.
\newblock On the power of tree-depth for fully polynomial {FPT} algorithms.
\newblock In Rolf Niedermeier and Brigitte Vall{\'{e}}e, editors, {\em 35th
  Symposium on Theoretical Aspects of Computer Science, {STACS} 2018, February
  28 to March 3, 2018, Caen, France}, volume~96 of {\em LIPIcs}, pages
  41:1--41:14. Schloss Dagstuhl - Leibniz-Zentrum fuer Informatik, 2018.

\bibitem{KaskGelfandOttenDechter11}
Kalev Kask, Andrew Gelfand, Lars Otten, and Rina Dechter.
\newblock Pushing the power of stochastic greedy ordering schemes for inference
  in graphical models.
\newblock In Wolfram Burgard and Dan Roth, editors, {\em Proceedings of the
  Twenty-Fifth {AAAI} Conference on Artificial Intelligence, {AAAI} 2011, San
  Francisco, California, USA, August 7-11, 2011}. {AAAI} Press, 2011.

\bibitem{KimOPST18}
Eun~Jung Kim, Sang{-}il Oum, Christophe Paul, Ignasi Sau, and Dimitrios~M.
  Thilikos.
\newblock An {FPT} 2-approximation for tree-cut decomposition.
\newblock {\em Algorithmica}, 80(1):116--135, 2018.

\bibitem{KourteckyLevinOnn18}
Martin Kouteck{\'{y}}, Asaf Levin, and Shmuel Onn.
\newblock A parameterized strongly polynomial algorithm for block structured
  integer programs.
\newblock {\em CoRR}, abs/1802.05859, 2018.

\bibitem{LodhaOrdyniakSzeider16}
Neha Lodha, Sebastian Ordyniak, and Stefan Szeider.
\newblock A {SAT} approach to branchwidth.
\newblock In Nadia Creignou and Daniel~Le Berre, editors, {\em Theory and
  Applications of Satisfiability Testing - {SAT} 2016 - 19th International
  Conference, Bordeaux, France, July 5-8, 2016, Proceedings}, volume 9710 of
  {\em Lecture Notes in Computer Science}, pages 179--195. Springer Verlag,
  2016.

\bibitem{LodhaOrdyniakSzeider17}
Neha Lodha, Sebastian Ordyniak, and Stefan Szeider.
\newblock {SAT}-encodings for special treewidth and pathwidth.
\newblock In Serge Gaspers and Toby Walsh, editors, {\em Theory and
  Applications of Satisfiability Testing - {SAT} 2017 - 20th International
  Conference, Melbourne, VIC, Australia, August 28 - September 1, 2017,
  Proceedings}, volume 10491 of {\em Lecture Notes in Computer Science}, pages
  429--445. Springer, 2017.

\bibitem{MalikZhang09}
Sharad Malik and Lintao Zhang.
\newblock Boolean satisfiability from theoretical hardness to practical
  success.
\newblock {\em Commun. {ACM}}, 52(8):76--82, 2009.

\bibitem{MarquessilvaLynce14}
Joao Marques{-}Silva and In{\^{e}}s Lynce.
\newblock {SAT} solvers.
\newblock In Lucas Bordeaux, Youssef Hamadi, and Pushmeet Kohli, editors, {\em
  Tractability: Practical Approaches to Hard Problems}, pages 331--349.
  Cambridge University Press, 2014.

\bibitem{MarxWollan14}
D{\'{a}}niel Marx and Paul Wollan.
\newblock Immersions in highly edge connected graphs.
\newblock {\em {SIAM} J. Discrete Math.}, 28(1):503--520, 2014.

\bibitem{NesetrilM06}
Jaroslav Nesetril and Patrice~Ossona de~Mendez.
\newblock Tree-depth, subgraph coloring and homomorphism bounds.
\newblock {\em Eur. J. Comb.}, 27(6):1022--1041, 2006.

\bibitem{NesetrilMendez12}
Jaroslav Ne{\v s}et{\v r}il and Patrice~Ossona de~Mendez.
\newblock {\em Sparsity - Graphs, Structures, and Algorithms}, volume~28 of
  {\em Algorithms and combinatorics}.
\newblock Springer, 2012.

\bibitem{Prestwich09}
Steven~David Prestwich.
\newblock {CNF} encodings.
\newblock In Armin Biere, Marijn Heule, Hans van Maaren, and Toby Walsh,
  editors, {\em Handbook of Satisfiability}, volume 185 of {\em Frontiers in
  Artificial Intelligence and Applications}, pages 75--97. {IOS} Press, 2009.

\bibitem{ReidlRVS14}
Felix Reidl, Peter Rossmanith, Fernando~S{\'{a}}nchez Villaamil, and Somnath
  Sikdar.
\newblock A faster parameterized algorithm for treedepth.
\newblock In Javier Esparza, Pierre Fraigniaud, Thore Husfeldt, and Elias
  Koutsoupias, editors, {\em Automata, Languages, and Programming - 41st
  International Colloquium, {ICALP} 2014, Copenhagen, Denmark, July 8-11, 2014,
  Proceedings, Part {I}}, volume 8572 of {\em Lecture Notes in Computer
  Science}, pages 931--942. Springer, 2014.

\bibitem{SamerVeith09}
Marko Samer and Helmut Veith.
\newblock Encoding treewidth into {SAT}.
\newblock In {\em Theory and Applications of Satisfiability Testing - SAT 2009,
  12th International Conference, SAT 2009, Swansea, UK, June 30 - July 3, 2009.
  Proceedings}, volume 5584 of {\em Lecture Notes in Computer Science}, pages
  45--50. Springer Verlag, 2009.

\bibitem{MathWorld}
Eric Weisstein.
\newblock {MathWorld} online mathematics resource, 2016.
\newblock \url{http://mathworld.wolfram.com}.

\bibitem{Wollan15}
Paul Wollan.
\newblock The structure of graphs not admitting a fixed immersion.
\newblock {\em J. Comb. Theory, Ser. {B}}, 110:47--66, 2015.

\bibitem{Zhang09}
Hantao Zhang.
\newblock Combinatorial designs by {SAT} solvers.
\newblock In Armin Biere, Marijn Heule, Hans van Maaren, and Toby Walsh,
  editors, {\em Handbook of Satisfiability}, volume 185 of {\em Frontiers in
  Artificial Intelligence and Applications}, pages 533--568. {IOS} Press, 2009.

\end{thebibliography}

\end{document}